\newcommand{\mat}[1]{\ensuremath{\begin{bmatrix} #1 \end{bmatrix}}}
\newcommand{\vc}[1]{\ensuremath{\begin{bmatrix} #1 \end{bmatrix}}}
\newcommand{\ul}[1]{\underline{#1}}
\newcommand{\eps}{\varepsilon}
\renewcommand{\rho}{\varrho}
\renewcommand{\Pr}[1]{\mathrm{Pr}\left[#1\right]}
\newcommand{\E}[1]{\mathrm{E}\left[#1\right]}
\newcommand{\norm}[1]{\left\lVert #1 \right\rVert}
\renewcommand{\text}[1]{\mathrm{#1}}
\newcommand{\conv}[1]{\mathrm{conv}\left(#1\right)}
\newcommand{\col}[1]{\mathrm{col}\left(#1\right)}
\newcommand{\lag}[1]{\mathrm{lag}\left(#1\right)}
\newcommand{\conf}{\beta}
\newtheorem{theorem}{Theorem}
\newtheorem{proposition}{Proposition}
\newtheorem{remark}{Remark}
\newtheorem{corollary}{Corollary}
\newtheorem{definition}{Definition}
\newtheorem{assumption}{Assumption}
\newglossaryentry{x}{type=symbols,
	sort={x},
	dimension={\ensuremath{ n }},
	name={\ensuremath{\bm{x}}},
	description={State}
}
\newglossaryentry{xi}{type=symbols,
	sort={xi},
	dimension={\ensuremath{ n_{\xi} }},
	name={\ensuremath{\bm{\xi}}},
	description={Extended state}
}
\newglossaryentry{y}{type=symbols,
	sort={y},
	dimension={\ensuremath{ p }},
	name={\ensuremath{\bm{y}}},
	description={Output}
}
\newglossaryentry{u}{type=symbols,
	sort={u},
	dimension={\ensuremath{ m }},
	name={\ensuremath{\bm{u}}},
	description={Input}
}
\newglossaryentry{v}{type=symbols,
	sort={v},
	dimension={\ensuremath{ m }},
	name={\ensuremath{\bm{v}}},
	description={Input correction term}
}
\newglossaryentry{d}{type=symbols,
	sort={d},
	dimension={\glsd{y}},
	name={\ensuremath{\bm{d}}},
	description={Disturbance}
}
\newglossaryentry{dset}{type=symbols,
	sort={dset},
	dimension={\glsd{y}},
	name={\ensuremath{\mathbb{D}}},
	description={Disturbance constraint set}
}
\newglossaryentry{ud}{type=symbols,
	sort={ud},
	dimension={\ensuremath{ \mathcal{U} }},
	name={\ensuremath{\bm{u}^{\mathrm{d}}}},
	description={Input data}
}
\newglossaryentry{vd}{type=symbols,
	sort={vd},
	dimension={\ensuremath{ \mathcal{V} }},
	name={\ensuremath{\bm{v}^{\mathrm{d}}}},
	description={Input correction term data}
}
\newglossaryentry{dd}{type=symbols,
	sort={dd},
	dimension={\ensuremath{ \mathcal{D} }},
	name={\ensuremath{\bm{d}^{\mathrm{d}}}},
	description={Disturbance data}
}
\newglossaryentry{xid}{type=symbols,
	sort={xid},
	dimension={\ensuremath{ \mathcal{X} }},
	name={\ensuremath{\bm{\xi}^{\mathrm{d}}}},
	description={Extended state data}
}
\newglossaryentry{yd}{type=symbols,
	sort={yd},
	dimension={\ensuremath{ \mathcal{Y} }},
	name={\ensuremath{\bm{y}^{\mathrm{d}}}},
	description={Output data}
}
\newglossaryentry{Tini}{type=symbols,
	sort={Tini},
	dimension={\ensuremath{ 1 }},
	name={\ensuremath{T_{\mathrm{p}}}},
	description={Upper bound of lag}
}
\newglossaryentry{Tf}{type=symbols,
	sort={Tf},
	dimension={\ensuremath{ 1 }},
	name={\ensuremath{T_{\mathrm{f}}}},
	description={Prediction horizon}
}
\newglossaryentry{yref}{type=symbols,
	sort={y},
	dimension={\ensuremath{ p }},
	name={\ensuremath{\bm{y}^{\mathrm{ref}}}},
	description={Output}
}
\newglossaryentry{uref}{type=symbols,
	sort={u},
	dimension={\ensuremath{ m }},
	name={\ensuremath{\bm{u}^{\mathrm{ref}}}},
	description={Input}
}
\newglossaryentry{yrefvec}{type=symbols,
	sort={y},
	dimension={\ensuremath{ p }},
	name={\ensuremath{\ul{\bm{y}}^{\mathrm{ref}}}},
	description={Output}
}
\newglossaryentry{urefvec}{type=symbols,
	sort={u},
	dimension={\ensuremath{ m }},
	name={\ensuremath{\ul{\bm{u}}^{\mathrm{ref}}}},
	description={Input}
}
\title{\LARGE \bf
Adaptive Stochastic Predictive Control from Noisy Data: \\ A Sampling-based Approach
}
\author{Johannes Teutsch, Christopher Narr, Sebastian Kerz, Dirk Wollherr, and Marion Leibold
\thanks{All authors are with the Chair of Automatic Control Engineering (LSR), Department of Computer Engineering,
        Technical University of Munich, Theresienstr. 90, 80333 Munich, Germany
        {\tt\small \{johannes.teutsch, c.narr, s.kerz, dirk.wollherr, marion.leibold\}@tum.de}}%
}
\newcommand\copyrighttext{%
  \footnotesize \textcopyright 2024 IEEE. Personal use of this material is permitted.
  Permission from IEEE must be obtained for all other uses, in any current or future
  media, including reprinting/republishing this material for advertising or promotional
  purposes, creating new collective works, for resale or redistribution to servers or
  lists, or reuse of any copyrighted component of this work in other works.}
\newcommand\copyrightnotice{%
\begin{tikzpicture}[remember picture,overlay]
\node[anchor=south,yshift=10pt] at (current page.south) {\fbox{\parbox{\dimexpr\textwidth-\fboxsep-\fboxrule\relax}{\copyrighttext}}};
\end{tikzpicture}%
}
\begin{document}

\maketitle
\copyrightnotice

\thispagestyle{empty}
\pagestyle{empty}

\begin{abstract}
    In this work, an adaptive predictive control scheme for linear systems with unknown parameters and bounded additive disturbances is proposed. In contrast to related adaptive control approaches that robustly consider the parametric uncertainty, the proposed method handles all uncertainties stochastically by employing an online adaptive sampling-based approximation of chance constraints. The approach requires initial data in the form of a short input-output trajectory and distributional knowledge of the disturbances. This prior knowledge is used to construct an initial set of data-consistent system parameters and a distribution that allows for sample generation. As new data stream in online, the set of consistent system parameters is adapted by exploiting set membership identification. Consequently, chance constraints are deterministically approximated using a probabilistic scaling approach by sampling from the set of system parameters. In combination with a robust constraint on the first predicted step, recursive feasibility of the proposed predictive controller and closed-loop constraint satisfaction are guaranteed. A numerical example demonstrates the efficacy of the proposed method.

\end{abstract}

\section{Introduction}
Model predictive control (MPC) is widely regarded as the state-of-the-art method in performance-oriented control of constrained dynamical systems. As modern systems become more complex and sensors more abundant, designing predictive controllers from measurement data is becoming increasingly appealing.
For linear time-invariant (LTI) systems, predictive controllers may be constructed directly from data by representing all possible future finite-time input-output trajectories as linear combinations of available input-output data trajectories \cite{willems2005note,behavioraltheory2021,berberich2020data}. If the LTI system is subject to disturbances, stochastic predictive control promises efficient control with user-specified chance constraints that enable a trade-off between performance and safety.  
Stochastic data-driven predictive control schemes as in \cite{pan2023data,kerz2023datadriven} come with guarantees of recursive feasibility and chance constraint satisfaction, but rely on exact data in the sense that an input-output-\emph{disturbance} trajectory needs to be available. 
If the disturbance trajectory is unknown, considering all possible disturbance sequences that are consistent with the available input-output data, akin to a set membership identification (SMI) perspective, leads to a family of data-consistent predictors from which samples enable control design \cite{teutsch2024sampling}.

However, there is still a conceptual drawback inherent to the approaches above, in which data are treated as something that is available only offline to facilitate design. Online, during the control phase, outputs are repeatedly measured, used for control, and then discarded. In the presence of process noise, it is not immediately obvious how online data may be incorporated to improve control performance of the above-mentioned data-driven control schemes. 
In contrast, SMI suggests that online improvements are possible in principle, as the set of consistent system parameters shrinks and converges to a singleton given tight disturbance bounds and sufficiently rich inputs \cite{bai1998ConvergenceSMI}. 
The idea of SMI is not novel in the data-driven literature, e.g., it is used in the robust design of controllers from noisy data \cite{berberich2020robustfb} and implicit in the data-informativity framework \cite{datainformativity}. However, SMI has not yet been exploited for online updates of constraints and predictors in data-driven predictive control. 

In the MPC literature, the idea of updating a set of model parameters and adapting constraints or predictors appropriately goes far back \cite{veres1993smipc,tanaskovic2014adaptive} and is often referred to as adaptive MPC. For LTI systems subject to parametric model uncertainty, \cite{lorenzen2019robust} presents an adaptive robust MPC scheme. By integrating the set of consistent system parameters and the adaptation of tubes into one single optimization problem, the computational load of online updates is kept small.
In \cite{lu2023adaptivePE}, an additional condition on the predicted input sequence in the optimal control problem (OCP) guarantees persistent excitation and, thereby, convergence to the true system parameters.
In the presence of chance constraints and probabilistic disturbances, \cite{bujarbaruah2020adaptive} presents an adaptive stochastic MPC scheme for systems subject to a bounded time-varying offset --- a setting related to parametric model uncertainty --- where the domain of the unknown offset is iteratively refined online and treated robustly.
In \cite{arcari2023stochastic}, an adaptive stochastic MPC scheme is proposed for a setting with parametric model uncertainty. The parametric uncertainty is treated robustly by constructing homothetic tubes along the prediction horizon, while the additive disturbance is treated probabilistically based on robustified probabilistic reachable sets. Nevertheless, the robust treatment of parametric uncertainty might result in overly conservative control. 

As stochastic handling of parametric uncertainty is generally challenging due to complex uncertainty propagation, sampling-based approaches offer a simple remedy. A sampling-based predictive control scheme is presented in \cite{mammarella2018offline}, based on the offline uncertainty sampling approach of \cite{lorenzen2017stochastic} for the deterministic reformulation of the chance constraints. In \cite{Mammarella2020efficient}, the computational efficiency was improved by considering inner-approximations of the chance constraint set found through scaling of fixed complexity sets. However, possible adaptation of the controller using newly obtained measurement data during control is not considered.

\subsubsection*{Contribution}
In this work, we develop an adaptive stochastic predictive control algorithm for LTI systems with unknown system parameters subject to bounded additive disturbances. 
Based on input-output trajectory data, the key idea is to map the probability distribution and bounds of the unknown disturbances to a probability distribution over the system parameters that are consistent with the available data. This allows for deterministic approximation of the imposed chance constraints on the output with high confidence by employing a sampling-based probabilistic scaling approach \cite{mammarella2022chance} using samples of consistent system parameters and possible future disturbances. Online, as new input-output data become available, we use SMI to iteratively adapt the support of the probability distribution over consistent system parameters for sample generation. This leads to a reduction of model uncertainty and thus allows for increased control performance during closed-loop operation by rescaling the previously approximated chance constraints. A robust constraint on the first predicted step enables guarantees of recursive feasibility and closed-loop constraint satisfaction despite the adaptation scheme. In contrast to related adaptive control approaches (e.g., \cite{arcari2023stochastic}), the proposed sampling-based approach handles \textit{all} uncertainties stochastically.

\subsubsection*{Organization}
In Section~\ref{sec:setup}, we introduce the problem setup and preliminaries on sampling-based approximation of chance constraints. The design steps for the proposed controller and its theoretical guarantees are given in Section~\ref{sec:method}. Section~\ref{sec:eval} provides a numerical evaluation of the proposed controller, before we conclude the work in Section~\ref{sec:conclusion}.

\subsubsection*{Notation}
We write $\bm{0}_{m\times n}$ for a zero matrix of dimension $m \times n$ and $\bm{I}_n$ for the identity matrix of dimension $n \times n$. When the dimension is clear from the context, we omit the index. With $\bm{1}_n \in \mathbb{R}^n$, we denote a column-vector of all ones. We abbreviate the set of integers $\left\{a,\,\ldots,\, b\right\}$ by $\mathbb{N}_a^b$. The Moore-Penrose pseudo-rightinverse of a matrix $\bm{S}$ is defined as $\bm{S}^{\dagger} \coloneqq \bm{S}^{\top}\left(\bm{S}\bm{S}^{\top}\right)^{-1}$. The probability measure and the expectation operator conditioned on the measurements at time step $k$ are denoted by $\Pr{\cdot}$ and $\E{\cdot}$, respectively. With $\bm{S}_1 \otimes \bm{S}_2$, we denote the Kronecker product of the matrices $\bm{S}_1$, $\bm{S}_2$. By $\col{\bm{s}_a,\,\ldots,\,\bm{s}_b} \coloneqq \mat{\bm{s}^{\top}_a\,\cdots\,\bm{s}^{\top}_b}^{\top}$, we denote the result from stacking the vectors/matrices $\bm{s}_a,\,\ldots,\,\bm{s}_b$. For a matrix $\bm{S}$, we define the weighted 2-norm of the vector $\bm{s}$ as $\norm{\bm{s}}_{\bm{S}} \coloneqq \sqrt{\bm{s}^{\top} \bm{S} \bm{s}}$. We write $\bm{y}_{i|k}$ for the predicted output $i$ steps ahead of time step $k$. For any sets $\mathbb{S}_1,\mathbb{S}_2$, we write the Minkowski set addition as $\mathbb{S}_1 \oplus \mathbb{S}_2=\{\bm{s}_1 + \bm{s}_2 \mid \bm{s}_1 \in \mathbb{S}_1,~ \bm{s}_2 \in \mathbb{S}_2\}$ and the Pontryagin set difference as $\mathbb{S}_1 \ominus \mathbb{S}_2 = \{\bm{s}_1 \in \mathbb{S}_1 \mid \bm{s}_1 + \bm{s}_2 \in \mathbb{S}_1, ~\forall  \bm{s}_2 \in \mathbb{S}_2 \}$. Positive definiteness is denoted by $\bm{S} \succ \bm{0}$. We denote the ceil-function as $\lceil\cdot\rceil$, and we write $\left[\bm{s}\right]_{a}$ for the $a$-th element of the vector $\bm{s}$. Lastly, the operator ``$\le$" applies element-wise. 

\section{Problem Setup and Preliminaries} \label{sec:setup}
In this section, we present the problem setup and introduce the sampling-based approach for deterministic approximation of chance constraints as given in~\cite{mammarella2022chance}.

\subsection{Problem Setup}
We consider a discrete-time LTI system of the form
\begin{subequations} \label{eq:system_minimal}
\begin{align}
	\gls{x}_{k+1} &= \bm{A} \gls{x}_{k} + \bm{B} \gls{u}_{k} + \bm{E}\gls{d}_{k}, \\
	\gls{y}_k &= \bm{C} \gls{x}_{k} + \bm{D} \gls{u}_{k} + \gls{d}_{k},
\end{align}
\end{subequations}
with state $\gls{x}_k \in \mathbb{R}^{\glsd{x}}$, output $\gls{y}_{k}\in \mathbb{R}^{\glsd{y}}$, input $\gls{u}_{k} \in \mathbb{R}^{\glsd{u}}$, additive disturbance $\gls{d}_{k} \in \mathbb{R}^{\glsd{d}}$. The system matrices $\bm{A}$, $\bm{B}$, $\bm{C}$, $\bm{D}$, $\bm{E}$ are unknown, and measurements of the state $\gls{x}_{k}$ and disturbance $\gls{d}_{k}$ cannot be obtained. However, system~\eqref{eq:system_minimal} satisfies the following assumption.
\begin{assumption}[System properties]\label{assum:properties}
    System \eqref{eq:system_minimal} is controllable and observable, and an upper bound $\gls{Tini} \ge \lag{\bm{A},\bm{C}}$ is known, with $\lag{\bm{A},\bm{C}}$ being the smallest natural number $j \le n$ for which $\col{\bm{C}, \bm{C} \bm{A}, \ldots, \bm{C} \bm{A}^{j-1}}$ has rank $n$. Moreover, the input-output behavior of system~\eqref{eq:system_minimal} is representable by the stabilizable and detectable dynamics 
    \begin{subequations}\label{eq:system_arx_nonminimal}
    \begin{align}
        \gls{xi}_{k+1} &= \tilde{\bm{A}} \gls{xi}_{k} + \tilde{\bm{B}} \gls{u}_{k} + \tilde{\bm{E}}\gls{d}_{k},\\
        \gls{y}_{k} &= \bm{\Phi} \gls{xi}_{k} + \bm{\Psi} \gls{u}_{k} + \gls{d}_{k}, \label{eq:system}
    \end{align}
    \end{subequations}
    with the (not necessarily minimal) extended state $\gls{xi}_{k}$ given~as
    \begin{equation}
        \gls{xi}_{k} \coloneqq \vc{\col{\gls{u}_{k-\gls{Tini}},\,\dots,\,\gls{u}_{k-1}} \\ \col{\gls{y}_{k-\gls{Tini}},\,\dots,\,\gls{y}_{k-1}}} \in \mathbb{R}^{\glsd{xi}},\,\glsd{xi}\coloneqq(m+p)\gls{Tini},\label{eq:extstate}
    \end{equation}
    and the system matrices $\tilde{\bm{A}} \coloneqq \col{\bar{\bm{A}},\,\bm{\Phi}}$, $\tilde{\bm{B}} \coloneqq \col{\bar{\bm{B}},\,\bm{\Psi}}$, $\tilde{\bm{E}} \coloneqq \col{\bm{0},\, \bm{I}_{\glsd{y}}}$ for some $\bm{\Phi}$ and $\bm{\Psi}$, where
    \begin{equation}
        \bar{\bm{A}} \coloneqq \mat{\bm{0} & \bm{I}_{(\gls{Tini}-1)\glsd{u}} & \bm{0} & \bm{0} \\ \bm{0} & \bm{0} & \bm{0} & \bm{0} \\ \bm{0} & \bm{0} & \bm{0} & \bm{I}_{(\gls{Tini}-1)\glsd{y}} },\, \bar{\bm{B}} \coloneqq \mat{\bm{0}\\ \bm{I}_{\glsd{u}} \\ \bm{0}}. \label{eq:system_arx_nonminimal_helper}
    \end{equation}
\end{assumption}

System \eqref{eq:system_minimal} is subject to probabilistic output and hard input constraints for all time steps $k \ge 0$, given as
\begin{subequations}\label{eq:constraints}
\begin{align}
	&\Pr{\gls{y}_{k} \in \mathbb{Y}} \ge 1-\eps,~ \mathbb{Y} = \left\{\bm{y} \in \mathbb{R}^{\glsd{y}}  ~\left|~ \bm{G}_y \,\bm{y} \le \bm{g}_y \right.\right\}, \label{eq:outputcons} \\
	&\hspace{5mm}\gls{u}_{k} \in  \mathbb{U}, \hspace{14.6
	mm} \mathbb{U} = \left\{\bm{u} \in \mathbb{R}^{\glsd{u}} \left|~ \bm{G}_u \bm{u} \le \bm{g}_u \right.\right\}, \label{eq:inputcons}
	\end{align}
\end{subequations}
where $\mathbb{Y}$ and $\mathbb{U}$ are compact sets containing the origin, and $\eps \in (0,\,1)$ is the risk parameter. Moreover, the disturbance $\gls{d}_{k}$ satisfies the following assumption. 
\begin{assumption}[Disturbance bounds and distribution] \label{assum:distset}
    The disturbance $\gls{d}$ is the realization of a zero-mean random variable that is independent and identically distributed~(iid) according to a known probability density function $\bm{f}_{\gls{d}}(\cdot)$, with known compact polytopic support set
\begin{equation}\label{eq:distset}
    \gls{dset} = \left\{ \gls{d} \in \mathbb{R}^{\glsd{d}} ~\left|~ \bm{G}_d \bm{d} \le \bm{g}_d \right.\right\}.
\end{equation}
\end{assumption}

This paper aims to design an adaptive output-feedback predictive control scheme for system \eqref{eq:system_minimal}.
Given user-specified weighting matrices $\bm{Q}$, $\bm{R} \succ \bm{0}$, 
tracking references $\gls{yref}_k$, $\gls{uref}_k$ for all $ k \ge 0$, and prediction horizon $\gls{Tf} \ge 1$, the proposed controller aims to minimize the expected finite horizon cost
\begin{equation}\label{eq:cost_expected}
    J \coloneqq \E{\sum\limits_{l=0}^{\gls{Tf}-1} \left( \norm{\gls{y}_{l|k} - \gls{yref}_{k+l}}^2_{\bm{Q}} + \norm{\gls{u}_{l|k} - \gls{uref}_{k+l}}^2_{\bm{R}} \right)}
\end{equation}
in a receding horizon fashion while guaranteeing satisfaction of output chance constraints~\eqref{eq:outputcons} and input constraints \eqref{eq:inputcons}. Note that terminal costs and constraints are intentionally left out, as a stability analysis is out of the scope of this work.

Since the system matrices in \eqref{eq:system_minimal} and \eqref{eq:system_arx_nonminimal} are unknown in our problem setting, we assume to instead have access to an initial input-output data trajectory, collected offline before the control phase. This assumption is specified in more detail in Section~\ref{sec:method}. Furthermore, newly retrieved input-output data during control is used to adapt the controller.

For tractability of the proposed predictive control scheme, the chance constraints \eqref{eq:outputcons} need to be reformulated into a deterministic expression. 
As the true system parameters are unknown, an exact reformulation is impossible.
In Section~\ref{sec:method}, the input-output data, disturbance bounds, and disturbance distribution will yield a probability distribution over system parameters with bounded support. In turn, this enables a sampling-based probabilistic scaling approach \cite{mammarella2022chance} for an inner-approximation of the true chance constraint set.
Since the resulting approximating sets can be easily rescaled online, this approach allows for an efficient online adaptation of the chance constraint approximation during control. The following subsection briefly presents this probabilistic scaling approach; for a broader discussion, we refer to~\cite{mammarella2022chance}.

\subsection{Probabilistic Scaling} \label{sec:sampling}
Consider a general joint chance constraint $\Pr{\bm{G}_{\zeta}(\bm{w}) \bm{\zeta} \le \bm{g}_{\zeta}(\bm{w})} \ge 1-\eps$ where $\bm{\zeta}$ is the (deterministic) decision variable, and $\bm{G}_{\zeta}(\bm{w})\in\mathbb{R}^{n_{\mathrm{c}} \times n_{\zeta}}$, $\bm{g}_{\zeta}(\bm{w}) \in \mathbb{R}^{n_{\mathrm{c}}}$ are constraint parameters that depend on the realization $\bm{w} \in \mathbb{R}^{n_w}$ of a multivariate random variable. The corresponding $\eps$-chance constraint set ($\eps$-CSS) is defined as 
\begin{equation} \label{eq:chanceconsset}
    \mathbb{Z}^{\mathrm{P}} = \left\{ \bm{\zeta} \in \mathbb{R}^{n_{\zeta}} ~\left|~ \Pr{\bm{G}_{\zeta}(\bm{w}) \bm{\zeta} \le \bm{g}_{\zeta}(\bm{w})} \ge 1-\eps \right. \right\}.
\end{equation}
The goal of the probabilistic scaling approach is to determine a deterministic inner-approximation $\mathbb{Z}^{\mathrm{S}}$ of the $\eps$-CCS \eqref{eq:chanceconsset} by using $N_{\mathrm{S}}$ iid uncertainty samples $\bm{w}^{(i)}$, $i \in \mathbb{N}_1^{N_{\mathrm{S}}}$, yielding $\Pr{\mathbb{Z}^{\mathrm{S}} \subseteq \mathbb{Z}^{\mathrm{P}}} \ge 1-\conf$ with a desired, user-chosen level of confidence $\conf$. The idea of the approach is to approximate the $\eps$-CSS \eqref{eq:chanceconsset} by probabilistically scaling a pre-defined scalable simple approximating set (SAS) of fixed complexity, i.e., 
\begin{equation}
    \mathbb{Z}^{\mathrm{S}}(\sigma) \coloneqq \{\bm{\zeta}_{\mathrm{c}}\} \oplus \sigma \mathbb{Z}^{\mathrm{SAS}},
\end{equation}
with center $\bm{\zeta}_{\mathrm{c}}$, shape $\mathbb{Z}^{\mathrm{SAS}}$, and scaling factor $\sigma \ge 0$. The scaling factor $\sigma$ of the SAS $\mathbb{Z}^{\mathrm{S}}$ for a single sample $\bar{\bm{w}}$ of the uncertainty $\bm{w}$ is specified as follows. 
\begin{definition}[Scaling Factor~\cite{mammarella2022chance}] \label{def:scaling}
    For a given sample $\bar{\bm{w}}$ and an SAS $\mathbb{Z}^{\mathrm{S}}(\sigma)$ with center $\bm{\zeta}_{\mathrm{c}}$ and shape $\mathbb{Z}^{\mathrm{SAS}}$, the scaling factor $\sigma(\bar{\bm{w}}) \ge 0$ of $\mathbb{Z}^{\mathrm{S}}(\sigma)$ relative to $\bar{\bm{w}}$ is defined~as
    \begin{equation} \label{eq:scaling}
        \sigma(\bar{\bm{w}}) \coloneqq \left\{ \begin{array}{cc}
            \max\limits_{\mathbb{Z}^{\mathrm{S}}(\sigma) \subseteq \mathbb{Z}(\bar{\bm{w}})} \sigma & \mathrm{if }~ \bm{\zeta}_{\mathrm{c}} \in \mathbb{Z}(\bar{\bm{w}}) \\
            0 & \mathrm{otherwise,}
        \end{array}\right.
    \end{equation}
    with $\mathbb{Z}(\bar{\bm{w}}) \coloneqq \left\{ \bm{\zeta} \in \mathbb{R}^{n_{\zeta}} ~\left|~ \bm{G}_{\zeta}(\bar{\bm{w}}) \bm{\zeta} \le \bm{g}_{\zeta}(\bar{\bm{w}}) \right. \right\}$.
\end{definition}
The following proposition provides means of obtaining an optimal scaling factor $\sigma^*$ such that $\Pr{\mathbb{Z}^{\mathrm{S}}(\sigma^*) \subseteq \mathbb{Z}^{\mathrm{P}}} \ge 1-\conf$ with confidence $\conf$, based on samples of the uncertainty~$\bm{w}$.
\begin{proposition}[Probabilistic Scaling of SAS~\cite{mammarella2022chance}] \label{prop:scaling}
    For a given candidate SAS $\mathbb{Z}^{\mathrm{S}}(\sigma)$ with center $\bm{\zeta}_{\mathrm{c}} \in \mathbb{Z}^{\mathrm{P}}$, risk parameter $\eps \in (0,1)$, and confidence $\conf \in (0,1)$, let the sample complexity $N_{\mathrm{S}}$ be chosen as $N_{\mathrm{S}} \ge N_{\mathrm{PS}}(\eps,\,\conf)$, with
    \begin{equation} \label{eq:samplecomplexity_scaling}
        N_{\mathrm{PS}}(\eps,\,\conf) \coloneqq \frac{\left(1+\sqrt{3}\right)^2}{\eps}\ln\frac{1}{\conf}.
    \end{equation}
    Furthermore, for $N_{\mathrm{S}}$ iid uncertainty samples $\bm{w}^{(i)}$, $i \in \mathbb{N}_1^{N_{\mathrm{S}}}$, let $\bm{\sigma} \coloneqq \col{\sigma\left(\bm{w}^{(1)}\right),\,\dots,\,\sigma\left(\bm{w}^{(N_{\mathrm{S}})}\right)}$ be the vector of scaling factors. 
    Then, $\Pr{\mathbb{Z}^{\mathrm{S}}(\sigma^*) \subseteq \mathbb{Z}^{\mathrm{P}}} \ge 1-\conf$ holds if $\sigma^*$ is the $\lceil \frac{\eps N_{\mathrm{S}}}{2} \rceil$-th smallest entry of $\bm{\sigma}$.
\end{proposition}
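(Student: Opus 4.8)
The plan is to prove the inclusion guarantee in three stages: reduce the set inclusion $\mathbb{Z}^{\mathrm{S}}(\sigma)\subseteq\mathbb{Z}^{\mathrm{P}}$ to a bound on a scalar violation probability, then translate the failure event into an order statistic of the sampled scaling factors, and finally control the resulting binomial tail with a Chernoff-type inequality whose inversion yields the constant $(1+\sqrt{3})^2$. For the first stage, note that by Definition~\ref{def:scaling} and convexity of both the SAS and the polytope $\mathbb{Z}(\bm{w})$ (together with $\bm{\zeta}_{\mathrm{c}}\in\mathbb{Z}^{\mathrm{P}}$), a fresh realization $\bm{w}$ satisfies $\mathbb{Z}^{\mathrm{S}}(\sigma)\subseteq\mathbb{Z}(\bm{w})$ if and only if $\sigma\le\sigma(\bm{w})$. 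Hence $V(\sigma):=\Pr{\sigma>\sigma(\bm{w})}$ is non-decreasing and equals the probability that the scaled set escapes the random constraint set. I would then show $V(\sigma)\le\eps$ already forces $\mathbb{Z}^{\mathrm{S}}(\sigma)\subseteq\mathbb{Z}^{\mathrm{P}}$: for any fixed $\bm{\zeta}\in\mathbb{Z}^{\mathrm{S}}(\sigma)$, whole-set containment implies point containment, so $\Pr{\bm{G}_{\zeta}(\bm{w})\bm{\zeta}\le\bm{g}_{\zeta}(\bm{w})}\ge 1-V(\sigma)\ge 1-\eps$, i.e. $\bm{\zeta}\in\mathbb{Z}^{\mathrm{P}}$. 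It thus suffices to establish $V(\sigma^{*})\le\eps$ with probability at least $1-\conf$ over the draw of the $N_{\mathrm{S}}$ samples.

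For the second stage I would apply the probability integral transform to the iid scaling factors. Since $V$ is non-decreasing and coincides with the cumulative distribution function of $\sigma(\bm{w})$ away from its at most countably many jumps, the values $V(\sigma(\bm{w}^{(i)}))$ are (in the atom-free case, exactly) iid uniform on $[0,1]$, and monotonicity gives $V(\sigma^{*})=V(\sigma_{(r)})=V(\sigma(\bm{w}))_{(r)}$, the $r$-th smallest transformed value with $r=\lceil\eps N_{\mathrm{S}}/2\rceil$. The event $\{V(\sigma^{*})>\eps\}$ then coincides with ``fewer than $r$ of the $N_{\mathrm{S}}$ uniforms fall at or below $\eps$'', so $\Pr{V(\sigma^{*})>\eps}\le\sum_{j=0}^{r-1}\binom{N_{\mathrm{S}}}{j}\eps^{j}(1-\eps)^{N_{\mathrm{S}}-j}=\Pr{\mathrm{Bin}(N_{\mathrm{S}},\eps)\le r-1}$, with equality in the continuous case and ``$\le$'' when atoms are present.

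For the final stage, write $\mu:=\eps N_{\mathrm{S}}$ and $L:=\ln(1/\conf)$ and invoke the Chernoff-type sufficient condition that $\Pr{\mathrm{Bin}(N_{\mathrm{S}},\eps)\le r-1}\le\conf$ holds whenever $\mu\ge(r-1)+L+\sqrt{2(r-1)L}$. Since $r-1<\mu/2$ and the right-hand side is increasing in $r-1$, it suffices to require $\mu/2\ge L+\sqrt{\mu L}$. Setting $x=\sqrt{\mu}$ turns this into $x^{2}-2\sqrt{L}\,x-2L\ge 0$, whose positive root is $x=(1+\sqrt{3})\sqrt{L}$; hence $\mu\ge(1+\sqrt{3})^{2}L$, which is exactly $N_{\mathrm{S}}\ge N_{\mathrm{PS}}(\eps,\conf)$. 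Chaining the three stages gives $\Pr{\mathbb{Z}^{\mathrm{S}}(\sigma^{*})\subseteq\mathbb{Z}^{\mathrm{P}}}\ge 1-\conf$, as claimed.

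The main obstacle is the careful bookkeeping of strict versus non-strict inequalities and degenerate cases in the first two stages: one must correctly treat realizations with $\bm{\zeta}_{\mathrm{c}}\notin\mathbb{Z}(\bm{w})$ (where $\sigma(\bm{w})=0$) and possible atoms in the law of $\sigma(\bm{w})$, ensuring the transformed samples still stochastically dominate the uniform model so that the binomial bound is an upper rather than a lower bound; a single flipped inequality collapses the argument. The sharp constant is the other sensitive point, since a crude multiplicative Chernoff bound only delivers $8/\eps$ in place of $(1+\sqrt{3})^{2}/\eps$, so the specific tail inequality and the particular choice $r=\lceil\eps N_{\mathrm{S}}/2\rceil$ must be used in tandem to recover $N_{\mathrm{PS}}$.
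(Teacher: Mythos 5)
The paper itself contains no proof of Proposition~\ref{prop:scaling}: it is quoted, statement and constant included, from the cited reference \cite{mammarella2022chance}, so the only meaningful comparison is with that source's argument. Your three-stage proof --- reducing set containment to the violation probability $V(\sigma)$ of the scaling factor, bounding the order statistic $\sigma_{(r)}$ with $r=\lceil \eps N_{\mathrm{S}}/2\rceil$ by the binomial tail $\sum_{j=0}^{r-1}\binom{N_{\mathrm{S}}}{j}\eps^{j}(1-\eps)^{N_{\mathrm{S}}-j}$, and inverting that tail via the sufficient condition $\eps N_{\mathrm{S}}\ge (r-1)+\ln\tfrac{1}{\conf}+\sqrt{2(r-1)\ln\tfrac{1}{\conf}}$ (Alamo et al.'s sample-complexity lemma) so that the quadratic in $\sqrt{\eps N_{\mathrm{S}}}$ yields exactly $(1+\sqrt{3})^{2}$ --- is correct, handles the $\sigma(\bm{w})=0$ and atom edge cases in the right direction, and is essentially the same route taken in the cited reference.
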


The complexity of the inner-approximation $\mathbb{Z}^{\mathrm{S}}(\sigma^*)$ is fully determined by the shape $\mathbb{Z}^{\mathrm{SAS}}$. 
To apply Proposition~\ref{prop:scaling}, the optimization in \eqref{eq:scaling} needs to be solved for $N_{\mathrm{S}}$ samples, which can be parallelized. For polytopic SASs, this optimization can be solved efficiently via linear programming.

\section{Adaptive Stochastic Predictive Control} \label{sec:method}
In order to control system~\eqref{eq:system_minimal} without access to state measurements, we consider the equivalent representation \eqref{eq:system_arx_nonminimal} based on the artifical extended state $\gls{xi}_{k}$ which is defined by past inputs and outputs, see \eqref{eq:extstate}.
Together with the unknown system parameters $\bm{\Phi}$, $\bm{\Psi}$, the unknown future disturbance realizations $\gls{d}_{\mathrm{f},k} \coloneqq \col{\gls{d}_{k},\,\dots,\,\gls{d}_{k+\gls{Tf}-1}}$ form the uncertainty $\bm{w} \coloneqq \left\{\bm{\Phi},\,\bm{\Psi},\,\gls{d}_{\mathrm{f},k}\right\}$ which needs to be taken into account when formulating the OCP of the proposed predictive controller.
As common in robust and stochastic predictive control, we parameterize the control input as
\begin{equation} \label{eq:inputdecomp}
    \gls{u}_k = \gls{v}_k + \bm{K} \gls{xi}_k,
\end{equation}
where $\gls{v}_k$ is the input correction term determined by the controller, and $\bm{K}$ is a stabilizing extended state feedback gain. Such feedback gains can be directly computed from the available input-output data by solving data-dependent linear matrix inequalities, e.g., see \cite{van2023behavioral}.
The conceptual OCP associated with the proposed predictive controller then reads
\begin{subequations} \label{eq:ocp_original}
		\begin{align}
		& \underset{\gls{v}_{\mathrm{f},k}}{\mathrm{minimize}} ~~~ J\left(\gls{xi}_k,\, \gls{v}_{\mathrm{f},k}\right) \label{eq:ocp_original_cost}\\		
		&\mathrm{s.t.}~\forall \,l \in \mathbb{N}_0^{\gls{Tf}-1}:\notag\\
        & ~~~~\hspace{2pt}\gls{y}_{l|k} = \bm{M}_l^{y}\left(\bm{w}\right) \, \col{\gls{xi}_k,\, \gls{v}_{\mathrm{f},k}}  + \bm{m}_l^{y}\left(\bm{w}\right), \label{eq:samplepred_output}\\
        & ~~~~\,\gls{u}_{l|k} = \bm{M}_l^{u}\left(\bm{w}\right) \, \col{\gls{xi}_k,\, \gls{v}_{\mathrm{f},k}} + \bm{m}_l^{u}\left(\bm{w}\right), \label{eq:samplepred_input}\\
		& \Pr{\gls{y}_{l|k} \in \mathbb{Y}} \ge 1-\eps, \label{eq:ocp_original_chancecons} \\
		& ~~~~\,\gls{u}_{l|k} \in \mathbb{U},\label{eq:ocp_original_inputcons}
		\end{align}
\end{subequations}
where $\gls{v}_{\mathrm{f},k} \coloneqq \col{\gls{v}_{0|k},\,\dots,\,\gls{v}_{\gls{Tf}-1|k}}$ is the vector of predicted input correction terms and $\bm{M}_l^{y}$, $\bm{m}_l^{y}$, $\bm{M}_l^{u}$, $\bm{m}_l^{u}$ are predictor parameters defined in the Appendix.

The OCP~\eqref{eq:ocp_original} is intractable due to the probabilistic chance constraint \eqref{eq:ocp_original_chancecons} and the uncertainty $\bm{w}=\left\{\bm{\Phi},\,\bm{\Psi},\,\gls{d}_{\mathrm{f},k}\right\}$.
We address this problem by proposing a sampling-based reformulation of the OCP~\eqref{eq:ocp_original}. This reformulation is combined with an adaptive set-membership approach that reduces uncertainty in $\bm{\Phi}$, $\bm{\Psi}$ online. Since the disturbance distribution is known by Assumption~\ref{assum:distset}, sampling future disturbance realizations $\gls{d}_{\mathrm{f},k}$ is straightforward. In the next section, we show how recorded input-output data map the disturbance distribution to a distribution over consistent system parameters from which we can then draw samples of $\bm{\Phi},\,\bm{\Psi}$.

\subsection{System Data and Adaptive Set of System Matrices}
The initially available input-output data trajectory needs to satisfy the following assumption.
\begin{assumption}[Initial data]\label{assum:trajData}
    An input-output trajectory 
    generated by system \eqref{eq:system_minimal} of length $T+\gls{Tini}$ is available
    and yields data matrices $\bm{H}_u \coloneqq \mat{\gls{ud}_1 & \cdots & \gls{ud}_T}$, $\bm{H}_y \coloneqq \mat{\gls{yd}_1 & \cdots & \gls{yd}_T}$, and $\bm{H}_{\xi} \coloneqq \mat{\gls{xid}_1 & \cdots & \gls{xid}_T}$ where $\bm{H}_{\xi,u} \coloneqq \col{\bm{H}_{\xi},\, \bm{H}_u}$ has full row-rank. 
\end{assumption}
\begin{remark}
    For sufficiently large disturbance levels, full row-rank of $\bm{H}_{\xi,u}$ is not restrictive since appropriate inputs $\bm{H}_u$ can be chosen
    and the extended state data $\bm{H}_{\xi}$ are perturbed by iid random disturbances.
    For small disturbance levels and thus nearly singular $\bm{H}_{\xi,u}$, full row-rank can be guaranteed by appropriate design of an alternative extended state \cite{alsalti2023notes}.
\end{remark}
Since the data originate from system~\eqref{eq:system_minimal}, they satisfy
\begin{equation} \label{eq:dynamics_data}
    \bm{H}_{y} = \bm{\Phi} \bm{H}_{\xi} + \bm{\Psi} \bm{H}_{u} + \bm{H}_{d},
\end{equation}
with unknown system parameters $\bm{\Phi}$, $\bm{\Psi}$ and unknown disturbance data matrix $\bm{H}_{d}\coloneqq \mat{\gls{dd}_1 & \cdots & \gls{dd}_T}$. 
By Assumption~\ref{assum:distset}, every column of the unknown data matrix $\bm{H}_{d}$ lies within the bounds~\eqref{eq:distset}, i.e., $\bm{H}_{d}$ satisfies $\bm{G}_d \bm{H}_{d} \le \bm{1}_T^{\top} \otimes \bm{g}_d$.
Exploiting~\eqref{eq:dynamics_data} yields a set $\mathbb{A}_0$ of system parameters consistent with the given data (including the true system parameters): 
\begin{equation} \label{eq:setsysmat_initial}
    \mathbb{A}_0 \coloneqq \left\{\mat{\bm{\Phi} & \bm{\Psi}} ~\left|~ -\bm{G}_{d} \mat{\bm{\Phi} & \bm{\Psi}} \bm{H}_{\xi,u} \le \bm{G}_{\mathbb{A}} \right.\right\}.
\end{equation}
Here, the right-hand-side of the inequality depends on the disturbance bounds and the output data as $\bm{G}_{\mathbb{A}} \coloneqq \bm{1}_T^{\top} \otimes \bm{g}_d - \bm{G}_d \bm{H}_{y}$.
Since $\bm{H}_{\xi,u}$ has full row-rank, $\mathbb{A}_0$ inherits compactness from the disturbance bounds~\eqref{eq:distset}. 

Additional data will further constrain the set of system parameters. 
Starting from the initial set $\mathbb{A}_0$, we employ a set membership update using newly retrieved data during online operation of the proposed control scheme. That is, the system parameter set $\mathbb{A}_{k+1}$ at the next time step is retrieved by intersection of the current set $\mathbb{A}_{k}$ with the set $\Delta_k$ of parameters consistent with the newly observed data $(\gls{u}_k,\gls{y}_k)$ at time step $k \ge 0$, i.e., $\mathbb{A}_{k+1} = \mathbb{A}_{k} \cap \Delta_k$ with
\begin{equation} \label{eq:setsysmat_newdata}
    \Delta_k \coloneqq \left\{\mat{\bm{\Phi} & \bm{\Psi}} \,\left|\, -\bm{G}_{d} \mat{\bm{\Phi} & \bm{\Psi}} \mat{\gls{xi}_k\\\gls{u}_k} \le \bm{g}_d - \bm{G}_d \gls{y}_k \right.\right\}.
\end{equation}
Although \eqref{eq:setsysmat_newdata} is generally unbounded, $\mathbb{A}_k$ is bounded due to the properties of the initial set \eqref{eq:setsysmat_initial}.
Furthermore, since $\mathbb{A}_k$ is a compact polytopic set, it can be described as a convex hull over its $N_{\mathrm{v},k}$ vertices $\mat{\bm{\Phi}_{k,j} & \bm{\Psi}_{k,j}}$, i.e.,
\begin{equation} \label{eq:setsysmat_vert}
    \mathbb{A}_k \coloneqq \conv{\left\{\mat{\bm{\Phi}_{k,j} & \bm{\Psi}_{k,j}}\right\}_{j=1}^{N_{\mathrm{v},k}} }.
\end{equation}

At each time step, $\mathbb{A}_k$ represents the support set of the probability distribution over system parameters. 
In order to derive this probability distribution over $\mathbb{A}_k$, we introduce an affine map from the disturbance data to the system matrices. 
Since $\bm{H}_{\xi,u}$ has full row-rank, there exists a selection matrix $\bm{\Omega} \in \mathbb{R}^{T \times (\glsd{xi} + \glsd{u})}$ such that $\tilde{\bm{H}}_{\xi,u} \coloneqq \bm{H}_{\xi,u}\bm{\Omega}$ is square and invertible. The matrix $\bm{\Omega}$ selects $\glsd{xi} + \glsd{u}$ columns of $\bm{H}_{\xi,u}$ that yield full rank. 
By \cite[Proposition~2]{teutsch2024sampling}, the unknown disturbance data matrix $\bm{H}_{d}$ can be fully parameterized by $\glsd{xi} + \glsd{u}$ of its columns, i.e., $\tilde{\bm{H}}_{d} \coloneqq \bm{H}_{d}\bm{\Omega}$, and the available data. Consequently, an invertible mapping from $\tilde{\bm{H}}_{d}$ to the unknown system parameters $\mat{\bm{\Phi} & \bm{\Psi}}$ is obtained as 
\begin{equation} \label{eq:sysmat_data_affinetranform}
    \mat{\bm{\Phi} & \bm{\Psi}} = \bm{H}_{y}\bm{\Omega}\tilde{\bm{H}}_{\xi,u}^{-1} - \tilde{\bm{H}}_{d} \tilde{\bm{H}}_{\xi,u}^{-1}. 
\end{equation}
Now, let $\tilde{\bm{f}}_{d}$ denote the extension of the disturbance distribution $\bm{f}_{d}$ from Assumption~\ref{assum:distset} to the distribution of the matrix of $\glsd{xi}+\glsd{u}$ disturbances $\tilde{\bm{H}}_{d}$, where the support of $\tilde{\bm{H}}_{d}$ considers the data-dependent bounds $\mathbb{A}_k$ via \eqref{eq:sysmat_data_affinetranform}, cf.~\cite{teutsch2024sampling}. The transformation \eqref{eq:sysmat_data_affinetranform} then yields the corresponding distribution over the system matrices as
\begin{equation} \label{eq:sysmat_distribution}
    \bm{f}_{\mathbb{A}}\left(\mat{\bm{\Phi} & \bm{\Psi}}\right) \coloneqq \lvert \det{\tilde{\bm{H}}_{\xi,u}}\rvert \,\tilde{\bm{f}}_{d}\left(\bm{H}_{y}\bm{\Omega} - \mat{\bm{\Phi} & \bm{\Psi}}\tilde{\bm{H}}_{\xi,u}\right).
\end{equation}
Note that the distribution $\bm{f}_{\mathbb{A}}$ is adapted online only via the adaptive support set $\mathbb{A}_k$; updating the parameterization \eqref{eq:sysmat_data_affinetranform} with new data is not required. Distribution~\eqref{eq:sysmat_distribution} allows for drawing samples of the uncertain system parameters $\bm{\Phi}$, $\bm{\Psi}$ within the bounds~$\mathbb{A}_k$ (e.g., via rejection sampling~\cite{martino2010generalized}). Such samples are used in the following to approximate the chance constraints~\eqref{eq:ocp_original_chancecons} at every time step $k$.

\subsection{Constraint Sampling} \label{sec:conssampling}
In order to render OCP \eqref{eq:ocp_original} tractable, we subsequently describe how to deterministically approximate the chance constraint \eqref{eq:ocp_original_chancecons} to a user-specified level of confidence via sampling-based probabilistic scaling (see Section~\ref{sec:sampling}). In our setting, the extended state $\gls{xi}_k$ and the sequence of future correction inputs $\gls{v}_{\mathrm{f},k}$ from \eqref{eq:ocp_original} will act as the deterministic decision variables for the approximation (as $\bm{\zeta}$ in Sec.~\ref{sec:sampling}), whereas the system parameters $\bm{\Phi}$, $\bm{\Psi}$ and future disturbances $\gls{d}_{\mathrm{f},k}$ take the role of the uncertainty $\bm{w}$.

Let us define the $\eps$-CCS for the predicted outputs $\gls{y}_{l|k}$ at time step $k$ for all $l\in\mathbb{N}_0^{\gls{Tf}-1}$ in the prediction horizon as
\begin{equation} \label{eq:ocpchancecons_CSS}
    \mathbb{Y}^{\mathrm{P}}_{k,l} = \left\{ \left.\vc{\gls{xi}_k \\ \gls{v}_{\mathrm{f},k}} ~\right|~ \Pr{\gls{y}_{l|k} \in \mathbb{Y}} \ge 1-\eps \right\}.
\end{equation}
Note that $\gls{y}_{l|k}$ is related to $\col{\gls{xi}_k,\, \gls{v}_{\mathrm{f},k}}$ via the predictor \eqref{eq:samplepred_output}.
The goal of the sampling-based approach is to determine a deterministic inner-approximation $\tilde{\mathbb{Y}}_{k,l}$ of the $\eps$-CCS~\eqref{eq:ocpchancecons_CSS} by using $N_{\mathrm{S}}$ iid samples $\bm{w}^{(i)} \coloneqq \left\{\bm{\Phi}^{(i)},\,\bm{\Psi}^{(i)},\,\gls{d}^{(i)}_{\mathrm{f},k}\right\}$, $i \in \mathbb{N}_1^{N_{\mathrm{S}}}$, of the uncertainty. More formally, for every time step $k$ the inner-approximation $\tilde{\mathbb{Y}}_{k,l}$ must satisfy $\Pr{\tilde{\mathbb{Y}}_{k,l} \subseteq \mathbb{Y}^{\mathrm{P}}_{k,l}} \ge 1-\conf$ with confidence $\conf$.

As discussed in Section~\ref{sec:sampling}, the probabilistic scaling approach following Proposition~\ref{prop:scaling} allows for a deterministic approximation of the $\eps$-CCS in \eqref{eq:ocpchancecons_CSS} with a pre-defined level of confidence $\beta$ using a scalable SAS $\mathbb{Y}^{\mathrm{S}}_{l}(\sigma_{k,l}) \coloneqq \{\bm{c}_l\} \oplus \sigma_{k,l} \left(\mathbb{Y}^{\mathrm{SAS}}_{l} \ominus \left\{\bm{c}_l\right\}\right)$, $l \in \mathbb{N}_0^{\gls{Tf}-1}$, where $\bm{c}_l$ is a center (e.g., Chebyshev or geometric center) of the SAS shape $\mathbb{Y}^{\mathrm{SAS}}_{l}$. 
A natural candidate for the SAS shape $\mathbb{Y}^{\mathrm{SAS}}_l$ is obtained by sampling $\tilde{N}_{\mathrm{S}}$ uncertainties $\tilde{\bm{w}}^{(i)}$ for design and constructing the polytope $\mathbb{Y}^{\mathrm{SAS}}_l \coloneqq \left\{  \col{\gls{xi}_k,\, \gls{v}_{\mathrm{f},k}} \,\left|\,  \tilde{\bm{G}}^y_l \col{\gls{xi}_k,\, \gls{v}_{\mathrm{f},k}} \le \bm{1}_{\tilde{N}_{\mathrm{S}}} \otimes \bm{g}_y - \tilde{\bm{g}}^y_l\right. \right\}$, with 
\begin{align*}
    \tilde{\bm{G}}^y_l &= \col{\bm{G}_{y} \bm{M}_l^y\left(\tilde{\bm{w}}^{(1)}\right),\,\ldots,\,\bm{G}_{y} \bm{M}_l^y\left(\tilde{\bm{w}}^{(\tilde{N}_{\mathrm{S}})}\right)}, \\
    \tilde{\bm{g}}^y_l &= \col{\bm{G}_{y} \bm{m}_l^y\left(\tilde{\bm{w}}^{(1)}\right),\,\ldots,\, \bm{G}_{y} \bm{m}_l^y\left(\tilde{\bm{w}}^{(\tilde{N}_{\mathrm{S}})}\right)},
\end{align*}
using the output constraint parameters $\bm{G}_{y}$, $\bm{g}_{y}$ from \eqref{eq:outputcons}.

The $\eps$-CSS approximation performed at every time step $k$ is summarized in Algorithm~\ref{alg:scaling}. After the SAS $\mathbb{Y}^{\mathrm{SAS}}_l$ is determined offline before the control phase, the approximating set $\tilde{\mathbb{Y}}_{k,l} \coloneqq \mathbb{Y}^{\mathrm{S}}_{l}(\sigma_{k,l}^*)$ is adapted online by redetermining the probabilistic scaling factors $\sigma_{k,l}^*$ based on sampled system parameters from \eqref{eq:sysmat_distribution} with updated support $\mathbb{A}_k$.
As samples are drawn at random, it might occur that $\sigma_{k}^*$ results smaller than $\sigma_{k-1}^*$, which would lead to contracting constraint sets and introduce unnecessary conservatism. As a remedy, we require $\sigma_{k,l}^* \ge \sigma_{k-1,l}^*$ and thereby guarantee that constraints can only relax, i.e., $\tilde{\mathbb{Y}}_{k-1,l} \subseteq \tilde{\mathbb{Y}}_{k,l}$.
\begin{algorithm}
\caption{$\eps$-CSS Approximation (cf. Proposition~\ref{prop:scaling})} \label{alg:scaling}
\begin{algorithmic}[1]
\REQUIRE Scalable SAS $\mathbb{Y}^{\mathrm{S}}_{l}(\sigma_{k,l})$ for $l \in \mathbb{N}_0^{\gls{Tf}-1}$, risk parameter $\eps \in (0,\,1)$, confidence $\conf \in (0,\,1)$.
\STATE Draw $N_{\mathrm{S}} \ge N_{\mathrm{PS}}(\eps,\,\conf)$ iid uncertainty samples $\bm{w}^{(i)}$, $i \in \mathbb{N}_1^{N_{\mathrm{S}}}$, considering the system parameter bounds $\mathbb{A}_k$.
\FOR{$l \in \mathbb{N}_1^{\gls{Tf}-1}$}
\STATE Determine the scaling factors $\sigma\left(\bm{w}^{(i)}\right)$, $i \in \mathbb{N}_1^{N_{\mathrm{S}}}$ via Definition~\ref{def:scaling}.
\STATE Set $\sigma_{k,l}^*$ as the $\lceil \frac{\eps N_{\mathrm{S}}}{2} \rceil$-th smallest scaling factor.
\STATE If $k \ge 1$ and $\sigma_{k,l}^* < \sigma_{k-1,l}^*$: set $\sigma_{k,l}^* \leftarrow \sigma_{k-1,l}^*$.
\STATE Construct the approximating set $\tilde{\mathbb{Y}}_{k,l} \coloneqq \mathbb{Y}^{\mathrm{S}}_{l}(\sigma_{k,l}^*)$.
\ENDFOR
\end{algorithmic}
\end{algorithm}

Due to the input parameterization \eqref{eq:inputdecomp}, uncertainty is also introduced into the predicted inputs $\gls{u}_{l|k}$ for $l\in\mathbb{N}_1^{\gls{Tf}-1}$. In order to accommodate this uncertainty, we approximate the hard input constraints \eqref{eq:ocp_original_inputcons} in the prediction horizon analogously to the output chance constraints by defining a new risk parameter $\eps_u$ and confidence level $\conf_u$, yielding polytopic constraint sets $\tilde{\mathbb{U}}_{k,l}$ for $l\in\mathbb{N}_0^{\gls{Tf}-1}$, where $\tilde{\mathbb{U}}_{k,0}$ is such that the hard constraints $\mathbb{U}$ hold for the input that is actually applied, i.e., $\gls{v}_{0|k} + \bm{K} \gls{xi}_k \in \mathbb{U}$. Analogously to the output constraint sampling, this approximation is accomplished using the input predictor \eqref{eq:samplepred_input}.
In order to conclude the constraint sampling, we define the aggregate constraint set for time step $k$ as the intersection of the sampled input and output constraints $\tilde{\mathbb{U}}_{k,l}$, $\tilde{\mathbb{Y}}_{k,l}$ for all predicted steps $l\in\mathbb{N}_0^{\gls{Tf}-1}$, i.e.,
\begin{equation} \label{eq:constraints_red}
    \mathbb{C}_k \coloneqq \bigcap_{l=0}^{\gls{Tf}-1} \tilde{\mathbb{U}}_{k,l} \cap \tilde{\mathbb{Y}}_{k,l}
\end{equation}
Note that $\mathbb{C}_{k} \subseteq \mathbb{C}_{k+1}$ by design (see Algorithm~\ref{alg:scaling}, Step~5).

To render the control scheme recursively feasible, we construct an additional constraint $\mathbb{C}^{\mathrm{R}}_k$ on the first predicted step as proposed by \cite{lorenzen2017stochastic}; we refer to \cite{kerz2023datadriven} for a more~detailed discussion. Let $\mathbb{C}_{\gls{Tf},k}$ denote the set of feasible initial extended states and first inputs, obtained by projection of~\eqref{eq:constraints_red}:
\begin{equation} \label{eq:constraints_red_projected}
    \mathbb{C}_{\gls{Tf},k} \coloneqq \left\{ \vc{\gls{xi}_k \\ \gls{v}_{0|k}} \left|~ \begin{array}{c}
         \exists \, \gls{v}_{1|k},\, \dots,\, \gls{v}_{\gls{Tf}-1|k} \in \mathbb{R}^{\glsd{u}}:\\
         \col{\gls{xi}_k,\,\gls{v}_{\mathrm{f},k}} \in \mathbb{C}_k
    \end{array}\right. \right\}.
\end{equation}
Based on the matrix vertices $\tilde{\bm{A}}_{\mathrm{cl},k,j} \coloneqq \tilde{\bm{A}}_{k,j} + \tilde{\bm{B}}_{k,j}\bm{K}$, $j \in \mathbb{N}_1^{N_{\mathrm{v},k}}$ constructed from \eqref{eq:setsysmat_vert} and \eqref{eq:system_arx_nonminimal_helper}, and the disturbance bound $\gls{dset}$ from \eqref{eq:distset}, we determine a robust control invariant (RCI) set for system \eqref{eq:system_arx_nonminimal} with $\col{\gls{xi}_k,\,\gls{v}_{0|k}} \in \mathbb{C}_{\gls{Tf},k}$ of the form $\mathbb{C}_{\xi,k}^{\infty} \coloneqq \left\{ \gls{xi} \in \mathbb{R}^{\glsd{xi}} \left|~ \bm{G}^{\infty}_{\xi,k} \gls{xi} \le \bm{g}^{\infty}_{\xi,k} \right.\right\}$. For appropriate algorithms to determine $\mathbb{C}_{\xi,k}^{\infty}$, we refer to \cite[Section~5.3]{blanchini2015set}. Finally, the first-step constraint set is constructed as
\begin{equation} \label{eq:constraints_firststep}
    \mathbb{C}^{\mathrm{R}}_k \coloneqq \left\{ \hspace{-1mm}\vc{\gls{xi}_k \\ \gls{v}_{\mathrm{f},k}} \left| \hspace{-1mm} \begin{array}{l} 
	\forall \,\gls{d} \in \gls{dset},\, j \in \mathbb{N}_1^{N_\mathrm{v},k}: \\
    \tilde{\bm{A}}_{\mathrm{cl},k,j} \gls{xi}_k +\tilde{\bm{B}}_{k,j} \gls{v}_{0|k} + \tilde{\bm{E}} \gls{d} \in \mathbb{C}_{\xi,k}^{\infty}
	\end{array} \hspace{-3mm}\right. \right\}.
\end{equation}

The constraint set for the predictive controller at time step $k$ is thus given as $\mathbb{C}_k \cap \mathbb{C}^{\mathrm{R}}_k$. For initial feasibility, this intersection must be non-empty, i.e., the bounds in Assumptions~\ref{assum:distset} must be suitably tight, as large disturbance bounds likely lead to an empty intersection $\mathbb{C}_k$ of the sampled constraint sets. 

\subsection{Reformulation of the Cost Function}
Besides the constraints, also the cost function \eqref{eq:cost_expected} needs to be deterministically approximated. We do so by sample average approximation~\cite{kim2015guide} based on the predictors \eqref{eq:samplepred_output},\,\eqref{eq:samplepred_input}:
Given an uncertainty sample $\bm{w}^{(i)}$ and the associated input and output predictions $\gls{u}^{(i)}_{\mathrm{f},k}$, $\gls{y}^{(i)}_{\mathrm{f},k}$, the corresponding cost is
\begin{equation}
    J^{(i)} = \norm{\gls{y}^{(i)}_{\mathrm{f},k} - \gls{yrefvec}_{k}}^2_{\tilde{\bm{Q}}} + \norm{\gls{u}^{(i)}_{\mathrm{f},k}-\gls{urefvec}_{k}}^2_{\tilde{\bm{R}}},
\end{equation}
where $\gls{yrefvec}_{k} := \col{\gls{yref}_k,\,\dots,\,\gls{yref}_{k+\gls{Tf}-1}}$, $\gls{urefvec}_{k} := \col{\gls{uref}_k,\,\dots,\,\gls{uref}_{k+\gls{Tf}-1}}$, $\tilde{\bm{Q}} \coloneqq \bm{I}_{\gls{Tf}} \otimes \bm{Q}$, and $\tilde{\bm{R}} \coloneqq \bm{I}_{\gls{Tf}} \otimes \bm{R}$. By expressing predicted inputs and outputs 
in terms of the deterministic initial extended state $\gls{xi}_k$ and predicted correction inputs $\gls{v}_{\mathrm{f},k}$ via the predictors \eqref{eq:samplepred_output}, \eqref{eq:samplepred_input}, the sampled cost can be reformulated as
\begin{equation} \label{eq:cost_reform}
    J^{(i)} = \norm{\vc{\gls{xi}_k \\ \gls{v}_{\mathrm{f},k}}}^2_{\bm{Q}^{(i)}_{\mathrm{S}}} + 2{\bm{q}^{(i)}_{\mathrm{S},k}}^{\top} \vc{\gls{xi}_k \\ \gls{v}_{\mathrm{f},k}} + c^{(i)}_{\mathrm{S}},
\end{equation}
with the sampled parameters $\bm{Q}^{(i)}_{\mathrm{S}} \coloneqq \bm{Q}_{\mathrm{S}}\left(\bm{w}^{(i)}\right)$, $\bm{q}^{(i)}_{\mathrm{S},k} \coloneqq \bm{q}_{\mathrm{S},k}\left(\bm{w}^{(i)}\right)$, and $c^{(i)}_{\mathrm{S}} \coloneqq c_{\mathrm{S}}\left(\bm{w}^{(i)}\right)$ defined in the Appendix.

Based on $N_{\mathrm{avg}}$ uncertainty samples $\bm{w}^{(i)}$, the sample-average cost function that approximates \eqref{eq:cost_expected} then results in
\begin{equation} \label{eq:cost_sampleaverage}
    \hat{J}\left(\gls{xi}_k,\,\gls{v}_{\mathrm{f},k}\right) = \norm{\vc{\gls{xi}_k \\ \gls{v}_{\mathrm{f},k}}}^2_{\hat{\bm{Q}}_{\mathrm{S}}} + 2\hat{\bm{q}}^{\top}_{\mathrm{S},k} \vc{\gls{xi}_k \\ \gls{v}_{\mathrm{f},k}} + \hat{c}_{\mathrm{S}},
\end{equation}
where $\hat{c}_{\mathrm{S}} \coloneqq \left(1/{N_{\mathrm{avg}}}\right)\sum_{i=1}^{N_{\mathrm{avg}}} c^{(i)}_{\mathrm{S}}$ is a constant term that can therefore be neglected in the optimization, and 
the weights $\hat{\bm{Q}}_{\mathrm{S}} \coloneqq \left(1/{N_{\mathrm{avg}}}\right)\sum_{i=1}^{N_{\mathrm{avg}}} \bm{Q}^{(i)}_{\mathrm{S}}$, $
\hat{\bm{q}}_{\mathrm{S},k} \coloneqq \left(1/{N_{\mathrm{avg}}}\right)\sum_{i=1}^{N_{\mathrm{avg}}} \bm{q}^{(i)}_{\mathrm{S},k}.$

\subsection{Control Algorithm}
Based on the reformulated constraints \eqref{eq:constraints_red}, \eqref{eq:constraints_firststep} and cost \eqref{eq:cost_sampleaverage}, the OCP of the proposed predictive controller is
\begin{subequations} \label{eq:ocp}
	\begin{align}
		\underset{\gls{v}_{\mathrm{f},k}}{\mathrm{minimize}} ~~~&   \hat{J}\left(\gls{xi}_k,\, \gls{v}_{\mathrm{f},k}\right) \\		
		\mathrm{s.t. }~~~ & \col{\gls{xi}_k,\, \gls{v}_{\mathrm{f},k}} \in \mathbb{C}_k \cap \mathbb{C}_k^{\mathrm{R}}, \label{eq:ocp_constraints}
	\end{align}
\end{subequations}
where the constraint sets $\mathbb{C}_k$, $\mathbb{C}_k^{\mathrm{R}}$ are adapted over time $k$ based on the rescaled SASs as in Algorithm~\ref{alg:scaling}.
The implicit control law associated with OCP~\eqref{eq:ocp} reads as $\bm{\kappa}\left(\gls{xi}_k\right):= \gls{u}^{*}_k = \gls{v}^{*}_{0|k} + \bm{K} \gls{xi}_k$, where $\gls{v}^{*}_{0|k}$ is the first input of the optimal input vector $\gls{v}^*_{\mathrm{f},k}$.
The overall algorithm of the proposed controller is summarized in Algorithm~\ref{alg:controller}, split into an offline (before control) and online (during control) phase.

\begin{algorithm}
\caption{Adaptive Stochastic Predictive Control} \label{alg:controller}
\begin{algorithmic}[1]
\renewcommand{\algorithmicensure}{\textbf{Offline Phase:}}
\ENSURE
\STATE Retrieve an initial input-output data trajectory from system \eqref{eq:system_minimal} that satisfies Assumption~\ref{assum:trajData}.
\STATE Compute the initial set of system matrices $\mathbb{A}_0$ \eqref{eq:setsysmat_initial}.
\STATE Determine a stabilizing feedback gain $\bm{K}$.
\STATE Construct suitable SASs for the constraint sets and compute the initial constraint set $\mathbb{C}_0$~\eqref{eq:constraints_red} via Algorithm~\ref{alg:scaling}.
\STATE Determine the initial RCI set $\mathbb{C}_{\xi,0}^{\infty}$ and compute the corresponding first-step constraint $\mathbb{C}^{\mathrm{R}}_0$~\eqref{eq:constraints_firststep}.
\STATE Determine the weights of the cost function \eqref{eq:cost_sampleaverage}.
\renewcommand{\algorithmicensure}{\textbf{Online Phase:}}
\ENSURE for all $k \ge 0$:
\STATE Construct the current extended state $\gls{xi}_k$ from the most recent past $\gls{Tini}$ input-output measurements.
\STATE Solve the OCP \eqref{eq:ocp} to retrieve $\gls{v}^{\ast}_{0|k}$.
\STATE Apply the input $\gls{u}^*_{k} = \gls{v}^{\ast}_{0|k} + \bm{K} \gls{xi}_k$ to the system.
\STATE Adapt the set of system matrices $\mathbb{A}_{k+1}$ via \eqref{eq:setsysmat_newdata}.
\STATE Obtain $\tilde{\mathbb{Y}}_{k+1,l}$, $\tilde{\mathbb{U}}_{k+1,l}$, $l \in \mathbb{N}_0^{\gls{Tf}-1}$ using Algorithm \ref{alg:scaling}.
\STATE Compute adapted constraint sets $\mathbb{C}_{k+1}$ \eqref{eq:constraints_red}, $\mathbb{C}_{k+1}^{\mathrm{R}}$ \eqref{eq:constraints_firststep} based on $\mathbb{A}_{k+1}$, $\tilde{\mathbb{Y}}_{k+1,l}$, $\tilde{\mathbb{U}}_{k+1,l}$, $l \in \mathbb{N}_0^{\gls{Tf}-1}$.
\end{algorithmic}
\end{algorithm}

\subsection{Control-theoretic Properties} \label{sec:properties}
In the following, we present control theoretic properties of the proposed controller, enabled by the robust first-step constraint~\eqref{eq:constraints_firststep} as commonly used in the related literature, e.g., \cite{teutsch2024sampling,mammarella2018offline,lorenzen2017stochastic}. As a novel contribution, we show that the guarantees are preserved despite the adaptation scheme.

\begin{theorem}[Recursive Feasibility] \label{th:recfeas}
    Let $\mathbb{F}_k\left(\gls{xi}_{k}\right)$ be the set of all feasible input sequences for the OCP~\eqref{eq:ocp} at time step~$k$ and the corresponding extended state $\gls{xi}_{k}$, i.e.,
    \begin{equation} \label{eq:feasset}
        \mathbb{F}_k\left(\gls{xi}_{k}\right) = \left\{ \gls{v}_{\mathrm{f},k} ~\left|~ \col{\gls{xi}_k,\, \gls{v}_{\mathrm{f},k}} \in \mathbb{C}_k \cap \mathbb{C}^{\mathrm{R}}_k \right.\right\}.
    \end{equation}
    For every realization of $\gls{d}_{k} \in \gls{dset}$, it holds that
    $\mathbb{F}_k\left(\gls{xi}_{k}\right) \neq \emptyset \implies \mathbb{F}_{k+1}\left(\gls{xi}_{k+1}\right) \neq \emptyset$ under the proposed control law.
\end{theorem}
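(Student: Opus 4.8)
The plan is to run the usual shift-and-append recursive-feasibility argument, while taking care that the certificate produced at step $k$ is compatible with the \emph{recomputed} sets at step $k+1$, since both the parameter set $\mathbb{A}_k$ and the RCI set $\mathbb{C}_{\xi,k}^{\infty}$ are adapted online. First I would assume $\mathbb{F}_k(\bm{\xi}_k)\neq\emptyset$ and fix some feasible $\bm{v}_{\mathrm{f},k}^{\ast}$ with $\col{\bm{\xi}_k,\bm{v}_{\mathrm{f},k}^{\ast}}\in\mathbb{C}_k\cap\mathbb{C}_k^{\mathrm{R}}$. Applying the control law and propagating the true extended-state dynamics \eqref{eq:system_arx_nonminimal} gives $\bm{\xi}_{k+1}=(\tilde{\bm{A}}+\tilde{\bm{B}}\bm{K})\bm{\xi}_k+\tilde{\bm{B}}\bm{v}_{0|k}^{\ast}+\tilde{\bm{E}}\bm{d}_k$ with the unknown true matrices.

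The key structural fact I would establish first is that the true system parameters lie in $\mathbb{A}_k$ for every $k$: they lie in $\mathbb{A}_0$ by construction, and the measured pair $(\bm{u}_k,\bm{y}_k)$ is generated by the true system with some $\bm{d}_k\in\mathbb{D}$, so the true parameters also satisfy the defining inequality of $\Delta_k$ in \eqref{eq:setsysmat_newdata}; induction on $\mathbb{A}_{k+1}=\mathbb{A}_k\cap\Delta_k$ then closes this. Consequently the true closed-loop matrices are a convex combination $\sum_j\lambda_j\tilde{\bm{A}}_{\mathrm{cl},k,j}$, $\sum_j\lambda_j\tilde{\bm{B}}_{k,j}$ of the vertices \eqref{eq:setsysmat_vert}. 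Because $\bm{v}_{\mathrm{f},k}^{\ast}$ meets the robust first-step constraint \eqref{eq:constraints_firststep} vertex-wise and for every $\bm{d}\in\mathbb{D}$, I can rewrite $\bm{\xi}_{k+1}$ as the matching convex combination of the vertex successors $\tilde{\bm{A}}_{\mathrm{cl},k,j}\bm{\xi}_k+\tilde{\bm{B}}_{k,j}\bm{v}_{0|k}^{\ast}+\tilde{\bm{E}}\bm{d}_k\in\mathbb{C}_{\xi,k}^{\infty}$ and invoke convexity of the polytope $\mathbb{C}_{\xi,k}^{\infty}$ to conclude $\bm{\xi}_{k+1}\in\mathbb{C}_{\xi,k}^{\infty}$.

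The hard part, and the one point where the adaptation genuinely matters, is that feasibility at $k+1$ is posed against $\mathbb{C}_{k+1}^{\mathrm{R}}$, which uses the \emph{new} set $\mathbb{C}_{\xi,k+1}^{\infty}$ and new vertices, whereas so far I only have $\bm{\xi}_{k+1}\in\mathbb{C}_{\xi,k}^{\infty}$. I would bridge this by a monotonicity argument for the RCI sets: since $\mathbb{A}_{k+1}\subseteq\mathbb{A}_k$ the closed-loop dynamics at $k+1$ range over a subset of the vertex dynamics, and since $\mathbb{C}_k\subseteq\mathbb{C}_{k+1}$ (Algorithm~\ref{alg:scaling}, Step~5) the projected state-input constraint satisfies $\mathbb{C}_{T_{\mathrm{f}},k}\subseteq\mathbb{C}_{T_{\mathrm{f}},k+1}$. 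Hence any set that is RCI for the step-$k$ problem is also RCI for the relaxed step-$(k+1)$ problem; taking $\mathbb{C}_{\xi,k+1}^{\infty}$ as the maximal RCI set (or otherwise enforcing $\mathbb{C}_{\xi,k}^{\infty}\subseteq\mathbb{C}_{\xi,k+1}^{\infty}$ in the recomputation) yields $\bm{\xi}_{k+1}\in\mathbb{C}_{\xi,k}^{\infty}\subseteq\mathbb{C}_{\xi,k+1}^{\infty}$.

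To finish, I would exploit the defining RCI property of $\mathbb{C}_{\xi,k+1}^{\infty}$: from $\bm{\xi}_{k+1}\in\mathbb{C}_{\xi,k+1}^{\infty}$ there exists a correction $\bm{v}_{0|k+1}$ with $\col{\bm{\xi}_{k+1},\bm{v}_{0|k+1}}\in\mathbb{C}_{T_{\mathrm{f}},k+1}$ whose closed-loop successors remain in $\mathbb{C}_{\xi,k+1}^{\infty}$ for all $\bm{d}\in\mathbb{D}$ and all vertices, which is exactly the condition in \eqref{eq:constraints_firststep} and hence certifies membership in $\mathbb{C}_{k+1}^{\mathrm{R}}$. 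Since $\mathbb{C}_{T_{\mathrm{f}},k+1}$ is the projection \eqref{eq:constraints_red_projected} of $\mathbb{C}_{k+1}$, this $\bm{v}_{0|k+1}$ extends to a full sequence $\bm{v}_{\mathrm{f},k+1}$ with $\col{\bm{\xi}_{k+1},\bm{v}_{\mathrm{f},k+1}}\in\mathbb{C}_{k+1}$, so $\col{\bm{\xi}_{k+1},\bm{v}_{\mathrm{f},k+1}}\in\mathbb{C}_{k+1}\cap\mathbb{C}_{k+1}^{\mathrm{R}}$ and $\mathbb{F}_{k+1}(\bm{\xi}_{k+1})\neq\emptyset$. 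Everything but the RCI-monotonicity step parallels the non-adaptive arguments of \cite{lorenzen2017stochastic,kerz2023datadriven}, so I expect that step to be the crux.
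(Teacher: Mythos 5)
Your proof is correct and follows essentially the same route as the paper's: the robust first-step constraint \eqref{eq:constraints_firststep} forces $\bm{\xi}_{k+1} \in \mathbb{C}_{\xi,k}^{\infty}$, the monotonicity properties $\mathbb{A}_{k+1}\subseteq\mathbb{A}_k$ and $\mathbb{C}_k\subseteq\mathbb{C}_{k+1}$ give $\mathbb{C}_{T_{\mathrm{f}},k}\subseteq\mathbb{C}_{T_{\mathrm{f}},k+1}$ and hence $\mathbb{C}_{\xi,k}^{\infty}\subseteq\mathbb{C}_{\xi,k+1}^{\infty}$, and the RCI property of $\mathbb{C}_{\xi,k+1}^{\infty}$ then yields $\mathbb{F}_{k+1}\left(\bm{\xi}_{k+1}\right)\neq\emptyset$. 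You merely make explicit the details the paper compresses into ``by construction'' --- that the true parameters remain in $\mathbb{A}_k$, the vertex/convexity argument behind the first step, and the caveat that the recomputed RCI set must be maximal (or containment must be enforced) for the inclusion $\mathbb{C}_{\xi,k}^{\infty}\subseteq\mathbb{C}_{\xi,k+1}^{\infty}$ to hold.
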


\begin{proof}
    By robustness of the first-step constraint \eqref{eq:constraints_firststep}, $\col{\gls{xi}_k,\, \gls{v}_{\mathrm{f},k}} \in \mathbb{C}^{\mathrm{R}}_k$ implies $\gls{xi}_{k+1} \in \mathbb{C}_{\xi,k}^{\infty}$. As $\mathbb{C}_{\xi,k}^{\infty} \subset \left\{\gls{xi} \,\left|\, \mathbb{F}_k\left(\gls{xi}\right) \neq \emptyset \right.\right\}$, $\mathbb{C}_{k} \subseteq \mathbb{C}_{k+1}$ , and $\mathbb{A}_{k+1} \subseteq \mathbb{A}_k$ hold by construction, we have $\mathbb{C}_{\gls{Tf},k} \subseteq \mathbb{C}_{\gls{Tf},k+1}$ and thus $\mathbb{C}_{\xi,k}^{\infty} \subseteq \mathbb{C}_{\xi,k+1}^{\infty} \subset \left\{\gls{xi} \,\left|\, \mathbb{F}_{k+1}\left(\gls{xi}\right) \neq \emptyset \right.\right\}$, concluding the proof.
\end{proof}

Note that, as $\mathbb{C}_{k} \subseteq \mathbb{C}_{k+1}$ and $\mathbb{C}_{\xi,k}^{\infty} \subseteq \mathbb{C}_{\xi,k+1}^{\infty}$ hold by design, recursive feasibility is preserved even if the constraint sets are not adapted at every time step.

\begin{corollary}[Closed-loop Constraint Satisfaction]
    For $\gls{xi}_0 \in \mathbb{C}_{\xi,0}^{\infty}$, the closed-loop system under the proposed control law satisfies the output chance constraint \eqref{eq:outputcons} with confidence $\conf$ and the input constraints \eqref{eq:inputcons} for all $k \ge 0$.
\end{corollary}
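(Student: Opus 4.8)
The plan is to separate the claim into three parts --- feasibility at every step, deterministic satisfaction of the input constraint, and probabilistic satisfaction of the output constraint --- and to dispatch them in that order, treating the corollary as a consequence of Theorem~\ref{th:recfeas} together with the construction of the sampled constraint sets. First I would establish feasibility for all $k$. Since $\gls{xi}_0 \in \mathbb{C}_{\xi,0}^{\infty}$ and, as exploited in the proof of Theorem~\ref{th:recfeas}, $\mathbb{C}_{\xi,0}^{\infty} \subset \{\gls{xi} \mid \mathbb{F}_0(\gls{xi}) \neq \emptyset\}$, the OCP~\eqref{eq:ocp} is feasible at $k=0$. Recursive feasibility (Theorem~\ref{th:recfeas}) then propagates this forward for every realization $\gls{d}_k \in \gls{dset}$, so by induction an optimal solution $\gls{v}^*_{\mathrm{f},k}$ with $\col{\gls{xi}_k,\,\gls{v}^*_{\mathrm{f},k}} \in \mathbb{C}_k \cap \mathbb{C}_k^{\mathrm{R}}$ exists for all $k \ge 0$.

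Given feasibility, the input constraint follows deterministically: $\col{\gls{xi}_k,\,\gls{v}^*_{\mathrm{f},k}} \in \mathbb{C}_k \subseteq \tilde{\mathbb{U}}_{k,0}$, and $\tilde{\mathbb{U}}_{k,0}$ is by construction the set on which the actually applied input obeys the hard constraint, so $\gls{u}_k = \gls{v}^*_{0|k} + \bm{K}\gls{xi}_k \in \mathbb{U}$ for every $k$. For the output, I would use that $\col{\gls{xi}_k,\,\gls{v}^*_{\mathrm{f},k}} \in \mathbb{C}_k \subseteq \tilde{\mathbb{Y}}_{k,0}$ and that, by Proposition~\ref{prop:scaling} as applied inside Algorithm~\ref{alg:scaling}, $\tilde{\mathbb{Y}}_{k,0} \subseteq \mathbb{Y}^{\mathrm{P}}_{k,0}$ holds with confidence $\conf$; hence $\Pr{\gls{y}_{0|k} \in \mathbb{Y}} \ge 1-\eps$ with confidence $\conf$. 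Because the true parameters $\bm{\Phi}$, $\bm{\Psi}$ belong to $\mathbb{A}_k$ for all $k$ by construction of the set-membership update, evaluating the predictor~\eqref{eq:samplepred_output} at the true parameters and the applied input reproduces the realized output $\gls{y}_k$; under the stochastic treatment of all uncertainties this identifies the measure in~\eqref{eq:ocpchancecons_CSS} with that in~\eqref{eq:outputcons}, so the guarantee transfers to $\Pr{\gls{y}_k \in \mathbb{Y}} \ge 1-\eps$ at confidence $\conf$, which is exactly~\eqref{eq:outputcons}.

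The step I expect to be the main obstacle is reconciling the output confidence with the online relaxation in Step~5 of Algorithm~\ref{alg:scaling}, where $\sigma_{k,0}^*$ may be overwritten by the older $\sigma_{k-1,0}^*$ to keep the sets monotone (as required by Theorem~\ref{th:recfeas}). When no overwrite occurs, Proposition~\ref{prop:scaling} certifies $\tilde{\mathbb{Y}}_{k,0} \subseteq \mathbb{Y}^{\mathrm{P}}_{k,0}$ directly; when it does, I only inherit $\tilde{\mathbb{Y}}_{k,0} = \mathbb{Y}^{\mathrm{S}}_0(\sigma_{k-1,0}^*) \subseteq \mathbb{Y}^{\mathrm{P}}_{k-1,0}$ from the previous step, and closing the gap would require the monotonicity $\mathbb{Y}^{\mathrm{P}}_{k-1,0} \subseteq \mathbb{Y}^{\mathrm{P}}_{k,0}$. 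This nesting is delicate: although $\mathbb{A}_k \subseteq \mathbb{A}_{k-1}$ shrinks the support of the parameter distribution, renormalizing over the smaller support need not make $\gls{y}_{0|k} \in \mathbb{Y}$ uniformly more probable, so the $\eps$-CCS is not obviously nested in $k$. I would therefore either argue the nesting from additional structure of~\eqref{eq:samplepred_output} and $\bm{f}_{\mathbb{A}}$, or phrase the output guarantee per time step for the freshly computed scaling factor and reserve the relaxation purely for recursive feasibility. Everything else is routine once Theorem~\ref{th:recfeas} and Proposition~\ref{prop:scaling} are available.
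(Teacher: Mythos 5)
Your proposal reproduces the paper's proof essentially step for step: initial feasibility of OCP~\eqref{eq:ocp} from $\gls{xi}_0 \in \mathbb{C}_{\xi,0}^{\infty}$ propagated forward by Theorem~\ref{th:recfeas}, input constraint satisfaction from the hard condition $\gls{v}_{0|k} + \bm{K}\gls{xi}_k \in \mathbb{U}$ built into $\mathbb{C}_k$, and the output guarantee from $\mathbb{C}_k \subseteq \tilde{\mathbb{Y}}_{k,0}$ together with $\tilde{\mathbb{Y}}_{k,0} \subseteq \mathbb{Y}^{\mathrm{P}}_{k,0}$ holding with confidence $\conf$.

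The obstacle you flag at the end, however, is genuine, and it is worth knowing that the paper's own proof does not resolve it: it simply asserts that ``by design of the constraint sets'' the inclusion $\tilde{\mathbb{Y}}_{k,0} \subseteq \mathbb{Y}^{\mathrm{P}}_{k,0}$ holds with probability $1-\conf$, which is precisely the claim that Step~5 of Algorithm~\ref{alg:scaling} puts in question. When $\sigma^*_{k,0}$ is overwritten by the larger $\sigma^*_{k-1,0}$, Proposition~\ref{prop:scaling} only certifies inclusion into $\mathbb{Y}^{\mathrm{P}}_{k-1,0}$, i.e., with respect to the parameter distribution supported on $\mathbb{A}_{k-1}$; carrying this over to $\mathbb{Y}^{\mathrm{P}}_{k,0}$ requires the nesting $\mathbb{Y}^{\mathrm{P}}_{k-1,0} \subseteq \mathbb{Y}^{\mathrm{P}}_{k,0}$, and, as you correctly observe, restricting and renormalizing the density $\bm{f}_{\mathbb{A}}$ to the smaller support $\mathbb{A}_k$ can \emph{decrease} the conditional probability of the constraint event, so this nesting does not follow from $\mathbb{A}_k \subseteq \mathbb{A}_{k-1}$ alone. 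Note also that your proposed fallback --- stating the guarantee only for the freshly computed factor --- does not close the gap by itself, because the OCP then admits solutions in the larger, uncertified set $\mathbb{Y}^{\mathrm{S}}_0(\sigma^*_{k-1,0})$; what one can honestly conclude is that the chance constraint holds (with confidence $\conf$) with respect to the measure at the last time step at which the scaling factor in use was freshly accepted, which is a marginal rather than fully conditional-on-current-data guarantee. So your write-up is faithful to the paper's argument and, if anything, more candid than the paper about the additional hypothesis (or reinterpretation of the probability measure in \eqref{eq:outputcons}) its confidence claim implicitly relies on.
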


\begin{proof}
    For $\gls{xi}_0 \in \mathbb{C}_{\xi,0}^{\infty}$, a feasible pair $\col{\gls{xi}_0,\, \gls{v}_{\mathrm{f},0}} \in \mathbb{C}_0$ exists. Theorem~\ref{th:recfeas} and the constraint $\gls{v}_{0|k} + \bm{K} \gls{xi}_k \in \mathbb{U} \subset \mathbb{C}_k$ yield satisfaction of the input constraint \eqref{eq:inputcons} in closed-loop. 
    Furthermore, by design of the constraint sets, it holds that $\mathbb{C}_k \subseteq \tilde{\mathbb{Y}}_{k,0}$ and $\tilde{\mathbb{Y}}_{k,0} \subseteq \mathbb{Y}_{k,0}^{\mathrm{P}}$ with probability $1-\conf$. Thus, the chance constraint \eqref{eq:outputcons} is satisfied with confidence $\conf$ for all feasible $\col{\gls{xi}_k,\, \gls{v}_{\mathrm{f},k}} \in \mathbb{C}_k$, $k \ge 0$, which is sufficient for closed-loop chance constraint satisfaction.
\end{proof}

\subsection{Discussion on Computational Aspects} \label{sec:discuss}

As highlighted in Algorithm~\ref{alg:controller}, the proposed scheme is split into an offline and online phase. The offline phase consists of computationally heavy tasks such as the design of the candidate SAS sets that are used for chance constraint approximation, and the computation of the (maximal) RCI subset $\mathbb{C}_{\xi,0}^{\infty}$ for the first-step constraint~\eqref{eq:constraints_firststep}. 
For the online phase, the complexity of the OCP~\eqref{eq:ocp} --- a quadratic program --- is determined by the chosen SASs. The adaptation of the constraint sets~\eqref{eq:constraints_red} via probabilistic scaling consists of a series of linear programs, one for each step of the horizon if both the vertex and half-space representation of the SASs are known. 
Updates of the RCI subset $\mathbb{C}_{\xi,k}^{\infty}$ and the subsequent first-step constraint \eqref{eq:constraints_firststep} are typically more expensive. 
Alternatively to a full recomputation of the RCI subset, one can choose $\mathbb{C}_{\xi,k+1}^{\infty} \coloneqq \mu_k \mathbb{C}_{\xi,k}^{\infty}$ as $\mathbb{C}_{\xi,k}^{\infty}$ is a (non-maximal) RCI subset of $\mathbb{C}_{\gls{Tf},k+1}$ due to $\mathbb{C}_k \subseteq \mathbb{C}_{k+1}$, with scaling factor $\mu_k \ge 1$ such that $\mu_k \mathbb{C}_{\xi,k}^{\infty}$ is still contained in $\mathbb{C}_{\gls{Tf},k+1}$ \cite[Theorem~6]{rakovic2004invariant}. The largest possible $\mu_k$ can be determined via a series of linear programs.
In contrast, the SAS shapes cannot easily be updated without losing recursive feasibility, since containment is then no longer guaranteed solely by non-decreasing scaling factors.

In practice, adapting some of the above mentioned sets at each time step may likely be too costly. However, by design, all theoretical properties are still valid if updates occur in asynchronous fashion, i.e., at irregular user-specified intervals. For example, a separate machine or external platform may inject new constraints (based on the latest batch of data) into the control algorithm whenever processing is finished.

A direct adaptation of the set of system matrices $\mathbb{A}_k$ using~\eqref{eq:setsysmat_newdata} can lead to unbounded growth of set complexity in terms of linear inequalities. However, at the cost of additional conservatism, an over-approximation of $\mathbb{A}_k$ with fixed complexity can be employed \cite{lorenzen2019robust}. For the sets that are computed offline, redundant constraints should be removed \cite{lorenzen2017stochastic}.
In order to generate the required samples from arbitrary distributions supported by polytopic sets, one can employ rejection sampling \cite{martino2010generalized} or Hit-and-Run algorithms \cite{mete2012patternHR}.

\section{Numerical Evaluation} \label{sec:eval}
In this section, we evaluate the proposed adaptive control scheme on an example system in simulation.

\subsection{Simulation Setup}
The considered system represents a linearized DC-DC converter~\cite{lazar2008input}, which follows the dynamics from \eqref{eq:system_arx_nonminimal} with
\begin{equation}
    \bm{\Phi} = \mat{4.798 & 1 & 0.008 \\ 
                     0.115 & -0.143 & 0.996},~~
    \bm{\Psi} = \bm{0},
\end{equation}
and $\gls{Tini} = 1$.
The parameters of the polytopic input and output constraint sets~\eqref{eq:constraints} are defined as $\bm{G}_u = \col{\bm{I}_{\glsd{u}},-\bm{I}_{\glsd{u}}}$, $\bm{g}_u = 0.2\cdot\bm{1}_{2\glsd{u}}$, $\bm{G}_y = \col{\bm{I}_{\glsd{d}},-\bm{I}_{\glsd{d}}}$, $\bm{g}_y = 3\cdot\bm{1}_{2\glsd{y}}$.
The disturbance is bounded by the polytopic support set \eqref{eq:distset} with $\bm{G}_d = \col{\bm{I}_{\glsd{y}},-\bm{I}_{\glsd{y}}}$, $\bm{g}_d = \bm{1}_{2} \otimes \mat{0.1 & 0.05}^{\top}$, and is uniformly distributed within the bounds.

The initial input-output data trajectory (see Assumption~\ref{assum:trajData}) is retrieved by applying random admissible inputs for $T = 15$ time steps.
For the cost function~\eqref{eq:cost_expected}, we choose a prediction horizon of $\gls{Tf} = 5$ and the weighting matrices $\bm{R} = 1$, $\bm{Q} = \mat{\col{1,\,0} & \col{0,\,100}}$.
The feedback gain $\bm{K}$ is determined as $\bm{K} = \mat{-0.901 & -0.202 & 0.138}$ using the vertices of \eqref{eq:setsysmat_initial}, see \cite{teutsch2024sampling}. The candidate SASs are constructed by using $1\,000$ to $5\,000$ ``design" samples (see Sec.~\ref{sec:conssampling}); the total number of vertices and half-space inequalities of the SASs per prediction stage are listed in Table~\ref{tab:sas_comparison}.
For the constraint sampling (see Algorithm~\ref{alg:scaling}), we choose the risk parameter $\eps = 0.1$ and confidence level $\conf = 10^{-3}$.

The control goal is to track a square-wave output reference trajectory $\gls{yref}_{k}$, oscillating between $\col{0,\,-3}$ and $\col{0,\,3}$ every $50$ time steps.
The system is initialized at $\bm{\xi}_0 = \bm{0}$, and $T_{\mathrm{sim}} = 150$ control iterations are performed. We adapt the constraint set $\mathbb{C}_k$ via Algorithm~\ref{alg:scaling} in every time step. The RCI set $\mathbb{C}^{\infty}_{\xi,k}$ for the first-step constraint \eqref{eq:constraints_firststep} is rescaled at every $10$ time steps and fully recomputed only once at time step $k=80$ (see discussion in Section~\ref{sec:discuss}).

\subsection{Simulation Results}

We compare the proposed adaptive approach to the corresponding non-adaptive scheme with same initial data. Simulations are carried out in MATLAB on an AMD Ryzen 5 Pro 3500U with \texttt{quadprog} solving the OCP~\eqref{eq:ocp}.

\begin{figure}[!h]
    \centering
    \vspace{2mm}
    \includegraphics[clip, trim=2cm 20.5cm 10cm 1.85cm]{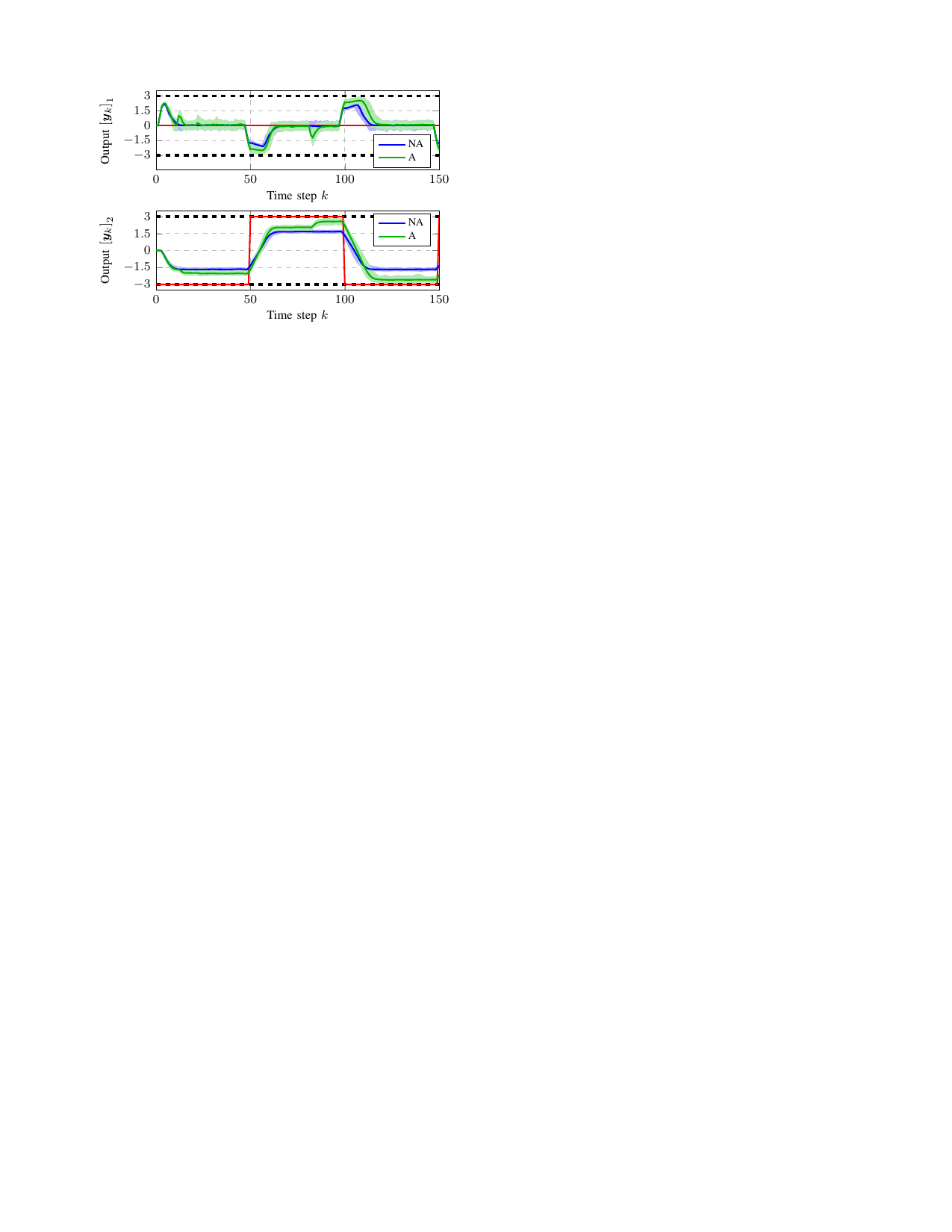}
    \caption{Closed-loop output trajectories for reference tracking. The output constraints are depicted as black dashed lines and the output reference $\gls{yref}_{k}$ is depicted in red. (NA: Non-adaptive scheme; ~A: Adaptive scheme)}
    \label{fig:output}
\end{figure}

\begin{figure}[!h]
    \centering
    \includegraphics[clip, trim=2cm 23.2cm 10cm 1.85cm]{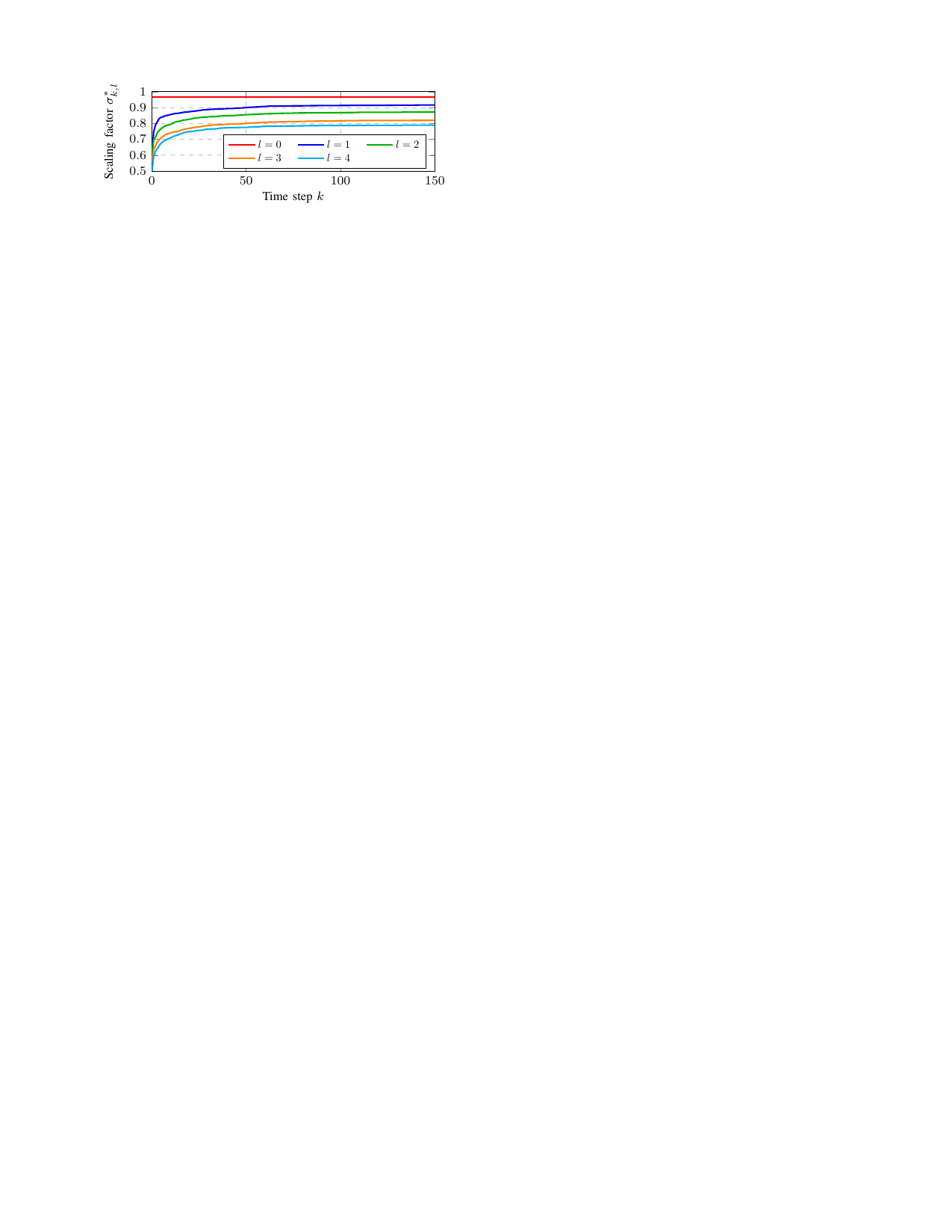}
    \caption{Mean evolution of the scaling factors $\sigma^*_{k,l}$ for output $\eps$-CSS approximation $\tilde{\mathbb{Y}}_{k,l}$ following Algorithm~\ref{alg:scaling}.}
    \label{fig:scaling}
\end{figure}

\begin{table}  
    \vspace*{2mm}
    \caption{SAS complexity and computation time of SAS scaling }
    \label{tab:sas_comparison}
    \begin{center}    
    \vspace*{-5mm}
        \begin{tabular}{c||c|c|c|c|c}
                Prediction stage $l$ & 0 & 1 & 2 & 3 & 4 \\ 
            \hhline{=||=|=|=|=|=}
                {\#vertices} & $248$ & $1\,144$ & $5\,088$ & $10\,980$ & $21\,814$\\
            \hline
                {\#half-spaces} & $74$ & $142$ & $240$ & $232$ & $208$\\
            \hline
                {Mean (in $\,\unit{\milli\s}$)} & $0.80$ & $1.90$ & $5.30$ & $14.92$ & $32.94$\\ 
            \hline 
                {Standard dev. (in $\,\unit{\milli\s}$)} & $0.56$ & $0.81$ & $1.32$ & $2.06$ & $3.66$\\ 
            \hline 
        \end{tabular}
    \end{center}
\end{table}

Fig.~\ref{fig:output} depicts the closed-loop output trajectories for $100$ runs subject to random disturbances. Compared to the non-adaptive scheme, the proposed adaptive scheme is able to drive the system outputs closer to the bounds after obtaining more data from the system, leading to an increased control performance over time. To depict the adaptation of constraint sets, Fig.~\ref{fig:scaling} shows the mean evolution of the scaling factors for the output $\eps$-CSS approximation as new data stream in. When comparing Fig.~\ref{fig:output} and Fig.~\ref{fig:scaling}, one can see that the performance of the proposed controller is not directly linked to the increase of the scaling factors; the adapted constraint sets enable an increase in control performance once the RCI set is adapted (e.g., rescaling at $k=10$ and recomputation at $k=80$). Table~\ref{tab:sas_comparison} provides the mean and standard deviation of computation times for rescaling the SASs of every prediction stage. The mean computation time for solving OCP~\eqref{eq:ocp}, rescaling the RCI set, and recomputing the RCI set is $1.82\,\unit{\milli\s}$, $426.78\,\unit{\milli\s}$, and $19.95\,\unit{\s}$, respectively.

At last, we compute the total tracking cost over the full simulation time for both methods as $J_{\mathrm{tot}} = \sum_{k=0}^{T_{\mathrm{sim}}} \left( \norm{\gls{y}_{k} - \gls{yref}_{k}}_{\bm{Q}}^2 + \norm{\gls{u}_{k}}_{\bm{R}}^2 \right)$.
The non-adaptive scheme results in an average total cost of $49\,594$, while the adaptive scheme results in an average total cost of $41\,512$, leading to an improvement of $16.30\,\%$ in average.

\section{Conclusion} \label{sec:conclusion}
We proposed an adaptive output-feedback predictive control scheme for constrained linear systems with unknown parameters subject to bounded probabilistic disturbances. 
Input-output data map the distributional knowledge of the disturbance to a distribution over the system parameters, with an adaptive support set based on set membership identification.
A probabilistic scaling approach based on samples from the set of system parameters and future disturbances allows for adaptive deterministic approximations of chance constraints, which can be efficiently performed online.  
In simulation, the proposed adaptive controller quickly improves beyond the equivalent controller solely based on the initial data. 
At the cost of rescaling or recomputing a robust control invariant set that enables recursive feasibility and closed-loop constraint satisfaction, the control performance is further improved.

Left open in this work is the analysis of possible convergence properties of the closed-loop system and convergence of the set of system parameters as in \cite{lu2023adaptivePE}.
Other interesting directions for future research are adaptation of SAS shapes while preserving recursive feasibility, and recursive feasibility guarantees without the use of computationally demanding robust control invariant sets.

\appendix
\section{Parameters}
For the parameters that are specified in the following, we omit the dependency on $\bm{w} = \left\{\bm{\Phi},\,\bm{\Psi},\,\gls{d}_{\mathrm{f},k}\right\}$ to simplify notation.
The parameters of the predictors \eqref{eq:samplepred_output},\,\eqref{eq:samplepred_input} are retrieved by explicitly solving the dynamics \eqref{eq:system_arx_nonminimal} with the input decomposition \eqref{eq:inputdecomp} for the predicted output $\gls{y}_{l|k}$, extended state $\gls{xi}_{l|k}$, and input $\gls{u}_{l|k}$, $l \in \mathbb{N}_0^{\gls{Tf}}$. For $l = 0$, we have $\bm{M}_0^{\xi} = \mat{\bm{I}_{\glsd{xi}} & \bm{0}_{\glsd{xi} \times \gls{Tf}\glsd{u}}}$, $\bm{m}_0^{\xi} = \bm{0}$, $\bm{M}_0^{u} = \mat{\bm{K} & \bm{I}_{\glsd{u}} & \bm{0}_{\glsd{u} \times (\glsd{Tf}-1)\glsd{u}}}$, $\bm{m}_0^{u} = \bm{0}$, $\bm{M}_0^{y} = \mat{\bm{\Phi}_{\mathrm{cl}} & \bm{\Psi} & \bm{0}_{\glsd{y} \times (\glsd{Tf}-1)\glsd{u}}}$, and $\bm{m}_0^{y} = \mat{\bm{I}_{\glsd{y}} & \bm{0}} \gls{d}_{\mathrm{f},k}$, with $\bm{\Phi}_{\mathrm{cl}} = \bm{\Phi} + \bm{\Psi}\bm{K}$. For $l \in \mathbb{N}_1^{\gls{Tf}}$, we obtain\\[-4mm]
\begin{subequations}\label{eq:predictors}
    \begin{align} 
        \bm{M}_{l}^{\xi} &= \mat{\tilde{\bm{A}}_{\mathrm{cl}}^l & \bm{\Omega}_{l}^v}, &\bm{m}_{l}^{\xi} &= \bm{\Omega}_{l}^d \gls{d}_{\mathrm{f},k},\\
        \bm{M}_l^y &= \bm{\Phi}_{\mathrm{cl}}\bm{M}_{l}^{\xi} + \mat{\bm{0} & \bm{\Psi} \bm{\Theta}_{l,\glsd{u}}}, &\bm{m}_l^y &= \bm{m}_{l}^{\xi} + \bm{\Theta}_{l,\glsd{y}} \gls{d}_{\mathrm{f},k}, \\
        \bm{M}_l^u &= \bm{K}\bm{M}_{l}^{\xi} + \mat{\bm{0} & \bm{\Theta}_{l,\glsd{u}}},  &\bm{m}_l^u &= \bm{K} \bm{m}_{l}^{\xi},    
    \end{align}
\end{subequations}
with $\tilde{\bm{A}}_{\mathrm{cl}} = \tilde{\bm{A}} + \tilde{\bm{B}}\bm{K}$ and
\begin{subequations}
    \begin{align}
        &\bm{\Omega}_{l}^v = \mat{\tilde{\bm{A}}_{\mathrm{cl}}^{l-1} \tilde{\bm{B}} & \tilde{\bm{A}}_{\mathrm{cl}}^{l-2} \tilde{\bm{B}} & \cdots & \tilde{\bm{B}} & \bm{0}_{\glsd{xi}\times(\gls{Tf}-l)\glsd{u}}}, \\
        &\bm{\Omega}_{l}^d = \mat{\tilde{\bm{A}}_{\mathrm{cl}}^{l-1} \tilde{\bm{E}} & \tilde{\bm{A}}_{\mathrm{cl}}^{l-2} \tilde{\bm{E}} & \cdots & \tilde{\bm{E}} & \bm{0}_{\glsd{xi}\times(\gls{Tf}-l)\glsd{y}}}, \\
        &\bm{\Theta}_{l,\glsd{u}} = \mat{\bm{0}_{\glsd{u}\times l \glsd{u}} & \bm{I}_{\glsd{u}} & \bm{0}_{\glsd{u}\times (\gls{Tf}-l-1)\glsd{u}}}.
    \end{align}
\end{subequations}
By concatenating the stage-wise predictors \eqref{eq:predictors}, we obtain the cost parameters for the reformulated cost function \eqref{eq:cost_reform}:
\begin{subequations}
    \begin{align}
        \bm{Q}_{\mathrm{S}} &= \bm{M}_y^{\top} \tilde{\bm{Q}} \bm{M}_y + \bm{M}_u^{\top} \tilde{\bm{R}} \bm{M}_u,\\ 
        \bm{q}_{\mathrm{S},k} &=  \bm{M}_y^{\top} \tilde{\bm{Q}} \gls{yrefvec}_k + \bm{M}_u^{\top} \tilde{\bm{R}} \gls{urefvec}_k,\\ 
        c_{\mathrm{S}} &= \bm{m}_y^{\top} \tilde{\bm{Q}} \bm{m}_y + \bm{m}_u^{\top} \tilde{\bm{R}} \bm{m}_u,
    \end{align}
\end{subequations}
with the multi-step predictor parameters $\bm{M}_{y} = \col{\bm{M}_{0}^{y},\,\dots,\,\bm{M}_{\gls{Tf}-1}^{y}}$, $\bm{m}_{y} = \col{\bm{m}_{0}^{y},\,\dots,\,\bm{m}_{\gls{Tf}-1}^{y}}$, $\bm{M}_{u} = \col{\bm{M}_{0}^{u},\,\dots,\,\bm{M}_{\gls{Tf}-1}^{u}}$, and $\bm{m}_{u} = \col{\bm{m}_{0}^{u},\,\dots,\,\bm{m}_{\gls{Tf}-1}^{u}}$. Note that for $\bm{q}_{\mathrm{S},k}$ we neglected terms involing $\gls{d}_{\mathrm{f},k}$ due to its i.i.d. and zero-mean properties.

\bibliographystyle{ieeetr}
\bibliography{mybib}

\end{document}